\documentclass[conference]{IEEEtran}

%%%%%%%%%%%%%%%%%%%%%%%%%%%%%%%%%%%%%%%%%%%%%%%%%%%%%%%%%%%%%%%%%%%%%%%
% PACKAGE IMPORT
%%%%%%%%%%%%%%%%%%%%%%%%%%%%%%%%%%%%%%%%%%%%%%%%%%%%%%%%%%%%%%%%%%%%%%%
\usepackage{booktabs} % For formal tables
\usepackage{graphicx}
\usepackage{comment}
\usepackage{amsmath,amssymb,amsfonts}
\usepackage{wrapfig,paralist}
\usepackage{lscape}
\usepackage{rotating}
\usepackage{epstopdf}
\usepackage{color,balance}
\usepackage{amsthm}
\usepackage[boxruled,linesnumbered]{algorithm2e}
\usepackage[hidelinks]{hyperref}
\usepackage[capitalise, noabbrev]{cleveref}
\usepackage{savesym}
\savesymbol{state}
\usepackage{tengwarscript}
\usepackage{xcolor}
\usepackage{url}

\usepackage[noend]{algpseudocode}
\usepackage{array}

\usepackage{enumitem}
\usepackage{xspace}
\usepackage{multirow}
\usepackage{subcaption}
\usepackage{pdflscape}
\usepackage{mathtools,balance}
\usepackage{booktabs}
\usepackage{float}
\usepackage{newfloat}
\usepackage{tikz}
\usepackage[sort]{cite}

\newtheorem{definition}{Definition}
\newtheorem{proposition}{Proposition}

\newtheorem*{proposition*}{Proposition}

\newcommand{\todo}[1]{\iftrue\textcolor{red}{{TODO: }{#1}}\fi}

\newcommand{\scm}[1]{\iftrue{\color{orange}SCM:#1}\fi}

\renewcommand\paragraph[1]{\vspace{0.3mm}\noindent \textbf{#1.\ }}

\newcommand{\system}{$\text{D}^{\text{2}}$\xspace}
\newcommand{\sysname}{Deceptive Deletions\xspace}
\newcommand{\game}{deceptive learning game\xspace}
\newcommand{\fullgame}{Deceptive Learning Game\xspace}

\newcommand{\op}{volunteered posts\xspace}

\newcommand{\damaging}{damaging\xspace}
\newcommand{\Damaging}{Damaging\xspace}
\newcommand{\nondamaging}{non-damaging\xspace}
\newcommand{\Nondamaging}{Non-damaging\xspace}

\newcommand{\dt}{\ensuremath{t}\xspace}
\newcommand{\damset}{\ensuremath{+}}

\newcommand{\decoyset}{\ensuremath{*}}
\newcommand{\offeredset}{\text{v}}
\newcommand{\delset}{\ensuremath{\delta}}

\newcommand{\sample}[1]{\ensuremath{\overset{#1}{\sim}}}

\newcommand{\cD}{\mathbb{D}}

\newcommand{\adv}{a}
\newcommand{\con}{g}

\newcommand{\budgetSA}{B_\text{static}}
\newcommand{\budgetAA}{B_\text{adapt}}
\newcommand{\budgetC}{B_\con}

\newcommand{\adaptive}{adaptive\xspace}
\newcommand{\nonadaptive}{static\xspace}

\newcommand{\Confuser}{Challenger\xspace}
\newcommand{\confuser}{challenger\xspace}
\newcommand{\confusers}{challengers\xspace}
\newcommand{\adversary}{adversary\xspace}
\newcommand{\epoch}{time interval\xspace}
\newcommand{\epochs}{time intervals\xspace}

\newcommand{\decoy}{decoy\xspace}

\newcommand{\subtest}{\text{test}}
\newcommand{\subtrain}{\text{train}}

\newcommand{\setDecoyPostsNotime}{\cD^\offeredset}
\newcommand{\setDecoyPostsTrainNotime}{\cD^{\offeredset, \subtrain}}
\newcommand{\setDecoyPostsTestNotime}{\cD^{\offeredset, \subtest}}

\newcommand{\setDecoyPosts}{\cD_{\leq t}^\offeredset}
\newcommand{\setDecoyPostsTrain}{\cD_{\leq t}^{\offeredset, \subtrain}}

\newcommand{\Nofferedlt}{\ensuremath{N^\offeredset}}

\newcommand{\NofferedltTrain}{\ensuremath{\budgetC}}

\newcommand*\circled[1]{\tikz[baseline=(char.base)]{
            \node[shape=circle,fill,inner sep=0.8pt] 
            (char) {\textcolor{white}{#1}};}}

\newcommand\blfootnote[1]{
	\begingroup
	\renewcommand\thefootnote{}\footnote{#1}
	\addtocounter{footnote}{-1}
	\endgroup
}
%%%%% NEW MATH DEFINITIONS %%%%%

\usepackage{amsmath,amsfonts,bm}

% Mark sections of captions for referring to divisions of figures

% Highlight a newly defined term

% Figure reference, lower-case.

% Figure reference, capital. For start of sentence

% Section reference, lower-case.

% Section reference, capital.

% Reference to two sections.

% Reference to three sections.

% Reference to an equation, lower-case.
\def\eqref#1{equation~\ref{#1}}
% Ribeiro, more than one equation

% Reference to an equation, upper case

% A raw reference to an equation---avoid using if possible

% Reference to a chapter, lower-case.

% Reference to an equation, upper case.

% Reference to a range of chapters

% Reference to an algorithm, lower-case.

% Reference to an algorithm, upper case.

% Reference to a part, lower case

% Reference to a part, upper case

\def\1{\bm{1}}

% Random variables

% rm is already a command, just don't name any random variables m

% Random vectors

% Elements of random vectors

% Random matrices

% Elements of random matrices

% Vectors

\def\vtheta{{{\theta}}}
\def\vphi{{{\phi}}}

\def\vw{{\bm{w}}}

% Elements of vectors

\def\evw{{w}}

% Matrix

% Tensor
\DeclareMathAlphabet{\mathsfit}{\encodingdefault}{\sfdefault}{m}{sl}
\SetMathAlphabet{\mathsfit}{bold}{\encodingdefault}{\sfdefault}{bx}{n}

% Graph

% Sets
\def\sA{{\mathbb{A}}}

% Don't use a set called E, because this would be the same as our symbol
% for expectation.

\def\sG{{\mathbb{G}}}

\def\sR{{\mathbb{R}}}

\def\sX{{\mathbb{X}}}

% Entries of a matrix

% entries of a tensor
% Same font as tensor, without \bm wrapper

% The true underlying data generating distribution

% The empirical distribution defined by the training set

% The model distribution

% Stochastic autoencoder distributions

 % Laplace distribution

 %....but this is a set in ICLR
\newcommand{\Ls}{\mathcal{L}}

% Wolfram Mathworld says $L^2$ is for function spaces and $\ell^2$ is for vectors
% But then they seem to use $L^2$ for vectors throughout the site, and so does
% wikipedia.

 % See usage in notation.tex. Chosen to match Daphne's book.

\DeclareMathOperator*{\argmax}{arg\,max}
\DeclareMathOperator*{\argmin}{arg\,min}

\usepackage{colortbl}
\newcolumntype{H}{>{\setbox0=\hbox\bgroup}c<{\egroup}@{}}

\SetCommentSty{mycommfont}

\pagestyle{plain}
%%%%%%%%%%%%%%%%%%%%%%%%%%%%%%%%%%%%%%%%%%%%%%%%%%%%%%%%%%%%%%%%%%%%%%%
% PAPER DOCS
%%%%%%%%%%%%%%%%%%%%%%%%%%%%%%%%%%%%%%%%%%%%%%%%%%%%%%%%%%%%%%%%%%%%%%%
\begin{document}

\author{\IEEEauthorblockN{Mohsen Minaei\authorrefmark{2}\authorrefmark{1},
S Chandra Mouli\IEEEauthorrefmark{2}\authorrefmark{1},
Mainack Mondal\IEEEauthorrefmark{4}, 
Bruno Ribeiro\IEEEauthorrefmark{2},
Aniket Kate\IEEEauthorrefmark{2}}
\IEEEauthorblockA{\IEEEauthorrefmark{2} Purdue University,\quad Email: \{mohsen, chandr, ribeirob, aniket\}@purdue.edu}
\IEEEauthorblockA{\IEEEauthorrefmark{4} IIT Kharagpur,\quad Email: mainack@cse.iitkgp.ac.in}
}

 \title{Deceptive Deletions for Protecting Withdrawn Posts on Social Platforms\vspace{-2mm}} 
\maketitle
\begin{abstract}
Over-sharing poorly-worded thoughts and personal information is prevalent on online social platforms. 
In many of these cases, users regret posting such content.
To retrospectively rectify these errors in users' sharing decisions, most platforms offer (deletion) mechanisms to withdraw the content, and social media users often utilize them. 
Ironically and perhaps unfortunately, these deletions make users more susceptible to privacy violations by malicious actors who specifically hunt post deletions at large scale. The reason for such hunting is simple: deleting a post acts as a powerful signal that the post might be damaging to its owner.
Today, multiple archival services 
are already scanning social media for these deleted posts.
Moreover, as we demonstrate in this work, powerful machine learning models can detect damaging deletions at scale.

Towards restraining such a global adversary against users' right to be forgotten, 
we introduce \textit{Deceptive Deletion}, 
a decoy mechanism that minimizes the adversarial advantage. Our mechanism injects decoy deletions, hence creating a two-player minmax game between an {\em \adversary} that seeks to classify damaging content among the deleted posts and a {\em \confuser} that employs decoy deletions to masquerade real damaging deletions. 
We formalize the Deceptive Game between the two players, determine conditions under which either the adversary or the challenger provably wins the game, and discuss the scenarios in-between these two extremes.
We apply the \textit{Deceptive Deletion} mechanism to a real-world task on Twitter: hiding damaging tweet deletions. We show that a powerful global adversary can be beaten by a powerful \confuser, raising the bar significantly and giving a glimmer of hope in the ability to be \textit{really} forgotten on social platforms.
\end{abstract}

%\keywords{Deletion Privacy; Social Media, Adversarial Learning} 
% !TEX root = main.tex
\section{Introduction}\label{sec:intro}
\blfootnote{\noindent*Both authors contributed equally and are considered co-first authors.}
Every day, millions of users share billions of (often personal) posts on online social media platforms like Facebook and Twitter.
This information is routinely archived and analyzed by multiple third parties ranging from individuals to state-level actors ~\cite{andre-2012-TwitterContentEval,huberman-2009-TwitterMicroscope,java-2007-whyweTwitter,ottowaelection2010-2014-raynauld,politicsteaparty-2013-reynolds,choudhury-2013-depressionTwitter,tsugawa-2015-depressionTwitter,pedersen-2015-ptsdTwitter}. %who collect, analyze and archive this data. 
Although the majority of these social media posts are benign, users also routinely post regrettable content on social media ~\cite{sleeper-2013-regretTweetsCHI, zhou-2016-regrettableDelTweet, bauer-2013-postanachronism} that they later wish to retract.
Subsequently, most social platforms provide user-initiated deletion mechanisms that allow users to rectify their sharing decisions and delete past posts. 
 Not surprisingly, users take advantage of these deletion mechanisms enthusiastically---Mondal et al.~\cite{mondal-2016-longitudinal-exposure}  showed that nearly one-third of six-year-old Twitter-posts were deleted. In another work, Tinati et al.~\cite{tinati2017instacan} showed that this number is much higher in Instagram, where almost half of the pictures posted within a six month period had been removed.

Ironically, current user-initiated deletion mechanisms may have an unintended effect: third-party archival services can identify deleted posts and infer that deleted posts might contain \textit{damaging} content from the post creator's point of view (i.e., having an adverse effect on the personal/professional life of the content creator). 
In other words, deletion might inadvertently make it easier to identify damaging content.
Indeed, today it is possible to detect deletions at scale: Twitter, for one, advertises user deletions in their streaming API
\footnote{Twitter provides a random sample of the publicly posted Twitter data in real time to the third parties via streaming API.} 
via deletion notifications~\cite{Twitterapi,Twitter-deletion-notification} so that third-party developers can remove these posts from their database.
Similarly, Pushshift~\cite{baumgartner2020pushshift,pushshift} is an archival system for all the contents on Reddit and Removeddit~\cite{removeddit} uses this archive to publicize all the deleted posts and comments on Reddit.
A malicious data-collector can simply leverage these notifications to flag deleted posts as possibly \textit{damaging} and further use 
them 
against the users~\cite{ed_shereen_buzzfeed,SNL,xue2016right}. 
Importantly, the hand-picked politicians and celebrities are {\em not} the only parties at the receiving end of these attacks. 
We find that the malicious data-collector can develop learning models to automate the process and perform an non-targeted (or global) attack at a large-scale;
e.g., Fallait Pas Supprimer~\cite{fallait} (i.e., ``Should Not Delete'' in English) is a Twitter account that collects and publishes the deleted tweets of not only the French politicians and celebrities but also noncelebrity French users with less than a thousand followers.

Asking the users not to post regrettable content on social platforms in the first place may seem like a good first step. 
However, users cannot accurately predict what content would be damaging to them in the future (e.g., after a breakup or before applying to a job). 
Zhou et al.~\cite{zhou-2016-regrettableDelTweet} and Wang et al.~\cite{wang2019donttweetthis} %take a more pragmatic approach and 
propose multiple types of classifiers (Naive Bayes, SVM, Decision Trees, and Neural Networks) to detect regrettable posts using users' history and
to proactively advise users even before the publication of posts. However, this proactive approach cannot prevent users from publishing future-regrettable posts. 
It is inevitable to focus on {\em reactive} mechanisms to assist users with protecting their post deletions.
 
Recently Minaei et al.~\cite{lethePets2019} proposed an intermittent withdrawal mechanism to tackle this challenge of hiding user-initiated deletions. 
They offer a deniability guarantee for user-initiated deletions in the form of an availability-privacy trade-off and ensure that when a post is deleted, the adversary cannot be immediately certain if it was actually deleted or temporarily made unavailable by the platform. 
Their trade-off could be useful for future social and archival platforms; however, in current commercial social media platforms like Twitter, sacrificing even a small fraction of availability for \textit{all the posts} is undesirable.

To this end, our research question is straightforward, yet highly relevant---\textit{can we enhance the privacy of the deleted and possibly \textit{damaging} posts at scale without excessively affecting the functionality of the platform?}

\paragraph{Contributions}
We make the following contributions.

First, we demonstrate the impact of deletion detection attacks by performing a proof-of-concept attack on real-world social media posts to identify \damaging content.
Specifically, we use a crowdsourced labeled corpus of (non)damaging deleted posts from Twitter (more than $4,000$ tweets) to train an adversary (a classifier). 
We demonstrate that our adversary is capable of detecting \damaging posts with high probability. More precisely, our adversary can increase its F-score by 27 percentage points (56\% increase) compared to a baseline adversary which uses random guessing to detect \damaging posts. Thus, it is indeed feasible for the adversary to use automated methods for detecting \damaging posts at a large scale (e.g., when focusing only on deleted posts).
In fact, we expect systems such as Fallait Pas Supprimer~\cite{fallait} to employ analogous learning techniques soon to improve their detection.  

Second, we identify that there are already deletion services which enable users to delete their content in bulk (e.g., ``twitWipe''~\cite{twitWipe} and ``tweetDelete''~\cite{tweetDelete} for Twitter, ``Social Book Post Manager''~\cite{facebookcleaner2} for Facebook, ``Cleaner for IG''~\cite{instacleaner} for Instagram,  ``Nuke Reddit History''~\cite{redditremoverextension}, and multiple bots on RequestABot subreddit for Reddit). However, these bulk deletions provide a clear signal to an adversary that the user is trying to hide \damaging content via deletion.
To that end, we introduce a novel deletion mechanism, Deceptive Deletions, 
that raises the bar for the adversary in identifying \damaging content.
Given a set of damaging posts (posts that adversary can leverage to blackmail the user) that users want to delete, 
the Deceptive Deletion system (also known as a \confuser)
\textit{selects} $k$ additional posts for each damaging post and deletes them along with the damaging posts.
The system-selected posts, henceforth called the \textit{decoy posts}, are taken from a pool of non-damaging non-deleted posts provided by volunteers. 
Since a global adversary can only observe all of these deletions together, his goal is to distinguish deleted damaging posts from the deleted (non-damaging) decoy posts. 
Intuitively, Deceptive Deletion is more effective if the selected decoy posts are similar to the damaging posts. 
These two opposite goals create a minmax game between the adversary and the \confuser that we further analyze.

Third, we introduce the \fullgame, which formally describes the minmax game between the adversary and the \confuser.
We start by considering a \nonadaptive adversary that tunes the parameters of its system (e.g., classifier for determining the \damaging posts) up until a certain point in time.
However, powerful adversaries are \adaptive and continually tune their models as they obtain more deletions including the decoy deletions made by the \confuser. 
Therefore, in the second phase, we consider an \adaptive adversary and describe the optimization problem of the \adaptive adversary and \confuser as a minmax game.
%\footnote{See~\cite{jin2019minmax} for another example of a minmax game in adversarial learning.}

We identify conditions under which either only the \adaptive adversary or only the \confuser provably wins the minmax game and discuss the scenarios in-between these two extremes.
To the best of our knowledge, this is the first attempt to develop a computational model for quantitative assessment of the damaging deletions in the presence of both \nonadaptive and \adaptive adversaries.

Finally, we empirically demonstrate that with access to a set of non-damaging volunteered posts, we can leverage 
Deceptive Deletions
to hide damaging deletions against both \nonadaptive and \adaptive adversary effectively.
We use real-world Twitter data to demonstrate the effectiveness of the \confuser. 
Specifically, we show that even when we consider only two decoy posts per damaging deletion, the adversarial performance (F-score) drops to 42\% from 75\% in the absence of any privacy-preserving deletion mechanism.

\section{Background and Related Work} \label{sec:related_work}

\subsection{Exisiting Content Deletion Mechanisms to Provide Privacy}
\noindent Today, most archival and social media websites (e.g., Twitter, Facebook) 
enable users to delete their content. 
Recent studies ~\cite{mondal-2016-longitudinal-exposure,almuhimedi-2013-deleteTweets} show that a significant number of users deleted content---35\% of Twitter posts are deleted within six years of posting them. This user-initiated deletion is also related to the ``Right to be Forgotten" ~\cite{xue2016right,weber-2011-rightToForget}. However, this user-initiated content deletion suffered from the \textit{Streisand effect} -- attempting to hide some information has the unintended consequence of gaining more attention~\cite{xue2016right}. Consequently, there is a need to provide \textit{deletion privacy} to users. 

In addition to user-initiated deletions, there exist some premeditated withdrawal mechanisms where all historical content is eventually deleted automatically to provide deletion privacy. These mechanisms can be broadly classified into two categories. First, in \textit{age-based withdrawal}, platforms like Snapchat~\cite{snapchat}  and Dust~\cite{dust} and systems like Vanish~\cite{geambasu-2009-vanish, geambasu-2011-vanishImprove} and EphPub~\cite{ephpub-2011-Castelluccia} automatically withdraw a piece of content after a preset time. Second, to make premeditated withdrawal more usable, 
Mondal et al.~\cite{mondal-2016-longitudinal-exposure} 
proposed \textit{inactivity-based withdrawal}, where  posts will be withdrawn only if they become inactive, 
i.e., there is no interaction with the post for a specified time period (e.g., no more views by other users).

However, even the premeditated withdrawals are not free from problems of their own. First, all the posts will eventually get deleted,
removing all archival history from the platform. 
Second, if posts are deleted before the preset time or in-spite of high interaction,
the adversary can be certain that the deletion was user-intended, violating deletion privacy.

\noindent Minaei et al.~\cite{lethePets2019} attempted to enable users to delete their content while neither attracting attention to deleted content nor deleting full historical archives. They presented a new intermittent withdrawal mechanism for all non-deleted posts, which provides a trade-off between availability and deletion privacy. In a nutshell, their system ensures that if an adversary found that a post if not available, then the adversary cannot be certain if the post is user-deleted or simply taken down by platform temporarily. Although this mechanism is useful for large internet archives, in platforms such as Facebook and Twitter, where content availability is crucial to the users and platform,
a privacy-availability trade-off might not be feasible.
Furthermore, the intermittent withdrawal mechanism does not consider the adversary's background knowledge about other deleted posts. Our work aims to bridge this gap and provide a novel learning-based mechanism which considers an adaptive adversary who aims to uncover tweet deletion. However, our mechanisms is not without precedent, and it is inspired by earlier work of obfuscation by noise injection.

Tianti et al.\ \cite{tinati2017instacan} offer intuitions for predicting posts deletions on Instagram with the goal of managing the storage of posts on the servers: Once a post is archived, it becomes computationally expensive to erase it; thus, predicting deletions can help in reducing the overheads of being compliant with the ``right to be forgotten'' regulations. These predictions in the non-adversarial setting, however, does not apply to our minmax game between the adversary and the challenger.

Recently Garg et al.~\cite{garg2020formalizing} formalize the right to be forgotten using platforms as a cryptographic game.
While being interesting, their definitions and suggested tools
such as history-independent data structures are not applicable
to our setting where the adversary has continuous and complete access to the collected data.

\subsection{Obfuscation using Noise Injection}

\noindent There has been a line of work in the area of (non-cryptographic) private information retrieval~\cite{howe2009trackmenot,murugesan2009providing,domingo2009h,peddinti2010privacy} that obfuscates the users' interest using dummy queries as noise to avoid user profiling.
Howe et al. proposed TrackMeNot~\cite{howe2009trackmenot, trackMeNotSite}, which issues randomized search queries to popular search engines to prevent the search engines in building any practical profile of the users based on their actual queries.  
GooPIR~\cite{domingo2009h} is a similar work that uses a Thesaurus to obtain the keywords to constructs $k - 1$ other
queries (dummy ones) and submit all $k$ queries at the same time.
This way, timing attacks by the search engines are eliminated. 
However, it only addresses single keyword searches; 
these schemes do not address full-sentence searches. 
Murugesan et al. propose ``Plausibly Deniable
Search" (PDS)~\cite{murugesan2009providing} that analogous to GooPIR generates $k-1$ dummy queries using latent
semantic indexing based approach. In their mechanisms, 
each real query is converted into a canonical query which protects against deanonymization attacks based on typos and grammar mistakes.
We note that all of the systems mentioned so far 
consider hiding each query separately. However,
a determined adversary may be able to find a user's interests by observing a sequence of such obfuscated queries.
Multiple works have investigated such weaknesses~\cite{peddinti2010privacy,peddinti2011effectiveness,balsa2012ob}.

Some relatively new techniques further try to overcome these shortcomings by \textit{smartly} generating the $k-1$ queries. 
For example, Petit et al. proposed PEAS~\cite{petit2015peas}, where they provide a combination of unlinkability and indistinguishability. 
However, apart from introducing an overhead for encrypting the user queries, their method also requires two proxy servers that are non-colluding, hence weakening the adversarial model.
K-subscription~\cite{papadopoulos2013k} is yet another work that proposes an obfuscation based approach that enables the user to follow privacy-sensitive channels in Twitter by requiring the users to follow $k-1$ other channels to hide the user interests from the microblogging service. 
However, the K-subscription has a negative social impact for the user as the user's social connections will see the user following these dummy channels. These shortcomings, both social and technical, motivated our particular design decision for Deceptive Deletions.

% !TEX root = main.tex
\section{System Model and Overview} \label{sec:sys_threat_model}

\begin{figure*}[!ht]
	\centering
	\includegraphics[width = \textwidth]{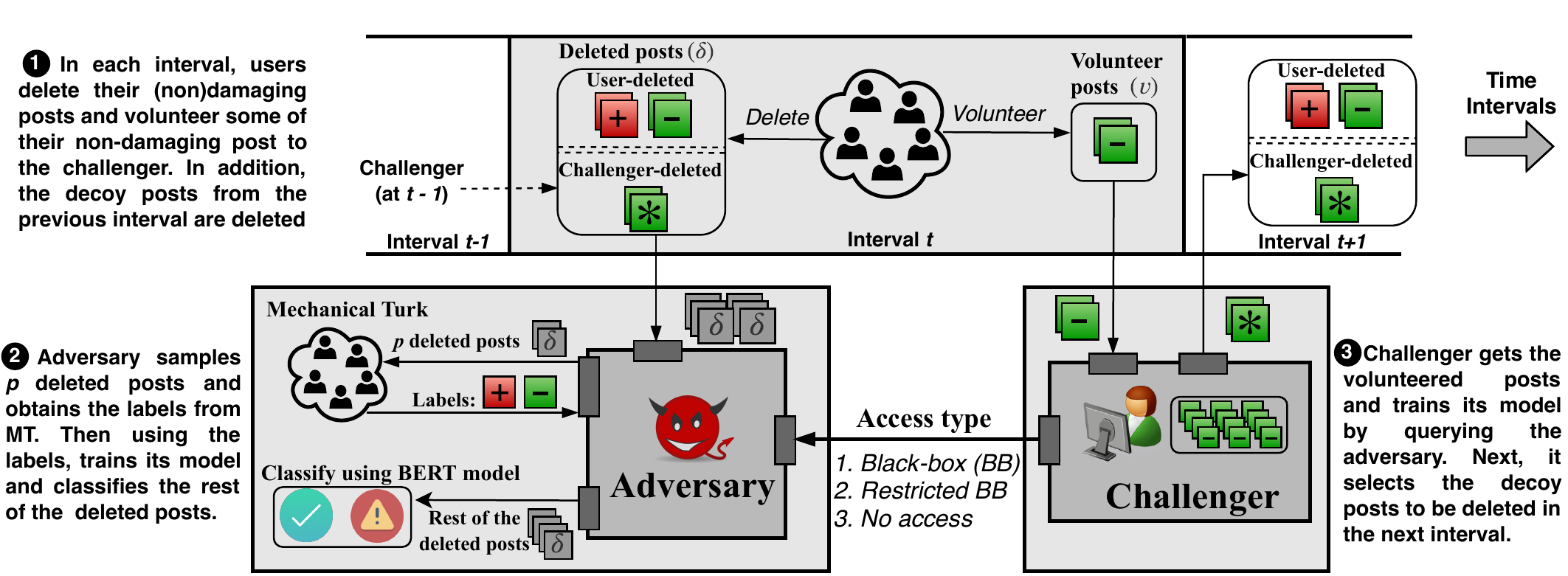}
	\caption{{Overview of Deceptive Deletions. 
	In each interval, the deletions are shown by gray squares with `$\delta$'. The deleted posts could be of three types: users' damaging deletions shown by red squares with `$+$', users' non-damaging deletions shown by green squares with `$-$' and challengers' decoys posts shown by green squares with `$*$'. Further, we denote the volunteer posts offered to the challenger during each interval by green squares with `$-$' to indicate that they are non-damaging.}}
	\label{fig:system_overview}
\end{figure*}

%=============================================
%       System Model
%=============================================
\subsection{System}
We consider a data-sharing \textit{platform} (e.g., Twitter or Facebook) as the public bulletin board where individuals can upload and view content.
\textit{Users} are the post owners that are able to publish/delete their posts, and view posts from other users. In this work, we consider discrete time intervals in which the users upload and delete posts (\cref{fig:system_overview} \circled{1}). A time interval could be as small as a minute or even a week, depending on the platform. 
We define two types of posts.

%\vspace{-2mm}
\begin{itemize}[leftmargin = *]
	\item \textbf{User-deleted posts} A user could delete a post for two primary reasons~\cite{lethePets2019, almuhimedi-2013-deleteTweets, mondal-2016-longitudinal-exposure}:
	    	\begin{itemize}
    		\item \textbf{\Damaging posts:} the post contained \textit{damaging} content to the user's personal or professional life, or
    		\item \textbf{\Nondamaging posts:} the post was out-dated, contained spelling mistakes, etc. %mass deletion of posts, 
    	\end{itemize}
		
    	\noindent \textit{An adversary's goal is to find the damaging posts among all the deleted ones that could be used to blackmail the corresponding owners of the post.}
    \vspace{1mm}	
	\item\textbf{Volunteered posts} We consider a subset of non-deleted posts that users \textit{willingly} offer to be deleted to protect the privacy of other users whenever needed. These volunteered posts are non-damaging and cannot be used by the adversary to blackmail the user of the post. We discuss the challenges of obtaining volunteered posts in \cref{sec:discussion}.
        
    	\noindent \textit{A challenger's goal is to select a subset of volunteered posts (i.e., \nondamaging) and delete them such that the aforementioned adversary is unable to distinguish between the \damaging and the \nondamaging post deletions. We denote the posts selected by the challenger as \textbf{decoy posts}.}
\end{itemize}

\paragraph{Notation} We use a subscript $\dt$ to denote the time interval and superscripts $\delset, \damset,
%\ndamset, 
\offeredset, \decoyset$ to denote the post type. In particular, $\cD_\dt$ is all the uploaded and deleted posts in \epoch $\dt$. Then we denote all the deleted posts (user- and challenger-deleted) in that interval as $\cD_\dt^\delset$, 
the \textbf{\damaging posts} as $\cD_\dt^\damset$, and
\textbf{volunteered posts} by $\cD_\dt^\offeredset$. The \textbf{decoy posts} that a challenger selects for deletion to fool the adversary is denoted by $\sG_\dt^\decoyset$. Note that $\sG_\dt^\decoyset \subseteq \cD_\dt^\offeredset \subseteq \cD_\dt {\setminus} \cD_\dt^\delset$.

%=============================================
%       Threat Model
%=============================================
\subsection{Adversary's Actions and Assumptions} \label{sec:threatmodel}

%The following describes the adversary's goals, actions and assumptions.

\paragraph{Task}
At a given \epoch, the task of the adversary is to correctly label all the deleted posts as being damaging to the post-owner or not. 
We {\em do not} focus on local attackers (or stalkers) targeting {individuals} or small groups of users.\footnote{Such stalkers can easily label their posts manually, and protecting against such an attack is extremely hard if not impossible.
For example, consider the case that a stalker continuously takes snapshots of its targeted user profile with the goal of identifying the user's deletions. With its background/auxiliary information about the user (i.e., knowing what contents are considered sensitive to the target), the stalker can effectively identify the damaging deletions. We claim that, in the current full-information model, protection against such a local adversary is impossible.}
Our global adversary instead seeks \damaging deletions on a large scale, rummaging through all the deleted posts to find as many \damaging ones as possible. 
Fallait Pas Supprimer~\cite{fallait} (from \cref{sec:intro}) is a real-world example of the global adversary.

\paragraph{Data access}
At any given \epoch, we assume that the adversary is able to obtain all the deleted posts by comparing different archived snapshots of the platform. 
{Although this strong data assumption benefits the adversary tremendously, we show in~\cref{sec:results} that Deceptive Deletions can protect the users' damaging deletions.}

\paragraph{Labels}
Our global, non-stalker adversary is not able to obtain the \textit{true label} (\damaging or \nondamaging) of the post from the user.
Instead, the adversary uses a crowdsourcing service like Mechanical Turk (MTurk)~\cite{amt} to obtain a proxy for these true labels. Although the labels obtained from the Mechanical Turkers (MTurkers) reflect societal values and not the user's intention, following previous work~\cite{wang2019donttweetthis}, we assume they closely match the \textit{true labels} in our experiments. 
This is reasonable as the adversary can expend a significant amount of effort and money to obtain these \textit{true labels}, at least for a small set of posts, that will ultimately be used to train a machine learning model. 

\paragraph{Budget}
Since there is a cost associated with acquiring label for each deleted post from the MTurkers, the aim of the adversary is to \textit{learn to detect the damaging deletions} under a budget constraint. We consider two types of budget constraints:

\begin{itemize}[leftmargin=*]
	\item \textbf{limited budget} where the adversary can only obtain the labels for a fixed number of posts $\budgetSA$, and

	\item \textbf{fixed recurring budget} where the adversary obtains the labels for a fixed number of posts $\budgetAA$ \textit{in each interval}.
\end{itemize}

The adversary with a limited budget is called the \textbf{\nonadaptive adversary} since it does not train after exhausting its budget. On the other hand, the adversary with a fixed recurring budget keeps adapting to the new deletions in each time interval, and hence is dubbed the \textbf{\adaptive adversary}.

\paragraph{Player actions}
At every \epoch $\dt$, the adversary obtains a set of posts $\sA_\dt^\delta$ for training by sampling part of the deleted posts, say $p$, from $\cD_\dt^\delta$,
%at that \epoch, 
an operation denoted by $\sA_\dt^\delta~\sample{p}~\cD_\dt^\delta$.
The adversary uses MTurk to label the sampled dataset $\sA_\dt^\delta$. After training, the task of the adversary is to classify the rest of the deleted posts of that \epoch.
Additionally, as the adversary gets better over time, it also \textit{relabels} all the posts deleted from the past intervals. The test set for the adversary is all the deleted posts from current and previous time intervals that were not used for training; i.e.,  $\bigcup_{\dt' \leq \dt} (\cD_{\dt'}^\delta \setminus \sA_{\dt'}^\delta)$. \cref{fig:system_overview}~\circled{2} shows the adversary's actions.

Note that although an \adaptive adversary can sample $p = \budgetAA$ deleted posts at \textit{every} \epoch and use MTurkers to label them, a \nonadaptive adversary can only obtain the labels until it runs out of the limited budget (after $\tau = \budgetSA/p$ \epochs). After this period, a \nonadaptive adversary does not train itself with new deleted posts. 

\paragraph{Performance metrics}
The adversary wishes to increase \textit{precision} and \textit{recall} for the classification of deleted posts into \damaging and \nondamaging sets. At every \epoch $\dt$, we report adversary's F-score\footnote{F-score = $2\cdot precision\cdot recall/(precision + recall)$} over the test set described above: deleted posts of all the past intervals, i.e., $\bigcup_{\dt' \leq \dt} (\cD_{\dt'}^\delta \setminus \sA_{\dt'}^\delta)$.

\subsection{Challenger's Actions and Assumptions} \label{subsec:threat_challenger}

\paragraph{Task} In the presence of an adversary as described above, the task of a challenger is to obtain \op (i.e. non-damaging and non-deleted posts) from users, select a subset of these posts and delete them in order to fool the adversary into misclassifying these \textit{challenger-deleted} posts as \damaging.
The challenger is honest, does not collude with the adversary, and works with the users (data owners) to protect their damaging deletions.
Other than the platforms themselves, third party services such as ``tweetDelete''~\cite{tweetDelete} 
can take the role of the challenger as well.
In \cref{sec:gan}, we discuss the flaws in a possible alternate approach where the challenger is allowed to \textit{generate} tweets rather than \textit{select} from pool of volunteered posts.

\paragraph{Data access}
{The challenger can be implemented by the platform or a third-party deletion service~\cite{tweetEraser,tweetDelete,tweetDeleter}, that has access to the posts of the users.}
Additionally, we assume that there are users over the platform who volunteer a subset of their non-damaging posts to be deleted anytime (or within a time frame) by the challenger, possibly, in return for privacy benefits for their (and other users') damaging deletions.

\paragraph{Labels}
{The challenger is implemented as part of the platform (or a third-party service permitted by the user). Thus, unlike the adversary that obtains proxy labels from crowdsourcing platforms, it has access to the \textit{true labels}--- \damaging or \nondamaging, from the owner of the post. This is easily implemented: before deleting a post, the user can specify whether the post is damaging (and needs protection). 
}

\paragraph{Access to the adversary}
The challenger not only knows the presence of a global adversary trying to classify the deleted posts into damaging and non-damaging posts but also can observe its behaviour.\footnote{Fallait Pas Supprimer~\cite{fallait} 
posts all its output on Twitter itself}. As a result, we consider three types of accesses to the adversary:

\vspace{-2mm}
\begin{itemize}[leftmargin=*]
	
	\item \textbf{no access} where the challenger has no information about the adversary.

	\item \textbf{monitored black-box access with a recurring query budget of $\budgetC$} where the challenger can obtain the adversary's classification probability for a limited number of posts $\budgetC$ every time interval, but the access is \textit{monitored}, i.e., the adversary can take note of every post queried and treat them separately. 
	
	\item \textbf{black-box access} where the challenger can obtain the adversary's classification probabilities for any post.
	
\end{itemize}
Here, {\em no access} is the weakest assumption that defines
the lower-bounds for our challenger's success. Nevertheless, we expect
the challenger to have some access to the adversary's classification. 
An \textit{unrestricted} black-box access 
serves as an upper bound for the challenger assuming that it can train a precise surrogate model of the adversary's classifier using its own training data. While employing such a surrogate model is common practice in the literature~\cite{kurakin2016adversarial,Papernot:2017:PBA:3052973.3053009}, it can be hard to obtain in real world without knowing the adversary's exact architecture and training data.
Our monitored black-box assumption with a recurring query budget (henceforth, interchangeably called the restricted black-box access) balances practicality of the access versus the feasibility of defending against an adversary with that access. 
In \cref{sec:model}, we introduce three challengers (oracle, \system and random) corresponding to the three types of accesses.

\paragraph{Player actions}
{At every \epoch $\dt$, the challenger receives new volunteer posts from the users and adds them to a set that stores the volunteered posts collected up until this point. 
Next, based on the type of access, it obtains the adversary's classification probabilities for some number of volunteer posts (the number is dependent on the access which we detail in~\cref{sec:model}). 
Finally, it selects \textit{decoy posts}, a subset of the volunteered posts collected up until this point
and deletes these posts in interval $\dt{+}1$ (hence the adversary sees these \textit{challenger-deleted} posts in interval $\dt{+}1$ as part of the deleted set $\cD_{\dt{+}1}^\delta$). 
\cref{fig:system_overview}~\circled{3} shows the challenger's actions.
}

\paragraph{Performance metrics}
The challenger, in direct contrast to the adversary, wishes to \textit{decrease} adversary's precision and recall for the classification of deleted posts. Adversary's precision will decrease if it classifies the injected \decoy posts as \damaging (increased false-positives). On the other hand, adversary's recall will decrease if it learns to be conservative in order to ignore the \decoy posts (increased false-negatives).

\section{The Deceptive Learning Game}\label{sec:model}

The \game is a two-player zero-sum non-cooperative game over time intervals $t=1, 2, ... $ (units) between an adversary who wishes to \textit{find} users' damaging deletions, and a challenger who wishes to \textit{hide} the said damaging deletions. The challenger achieves this by deleting volunteers' non-damaging posts as decoys. While the adversary's goal is to maximize its precision/recall scores on the classification task, the challenger's goal is to minimize them. 

We denote each post by $(x, y)$, where $x \in \sX$ represents the features of the post (i.e., text, comments, etc.) and $y \in \{0, 1\}$ denotes its true label such that $y=1$ if the post is damaging and $y=0$ if it is non-damaging. 
In the following subsections, we describe the actions of each player in the time interval $t$.

\subsection{Adversary} \label{subsec:adversary}

We denote the adversary's classifier at the beginning of interval $\dt$ by $\adv(~\cdot~; \vtheta_{t-1}) : \sX \rightarrow [0, 1]$ parameterized by $\vtheta_{t-1}$ such that $\adv(x; \vtheta_{t-1}) \coloneqq P(\hat{y}=1 ~|~ x; \vtheta_{t-1})$ is the predicted probability of the post $x$ being damaging. The adversary collects all the deletions that happen in this interval (i.e., $\cD_\dt^\delta$) and samples $p$ posts, denoted by $\sA_\dt^\delta$. The adversary then uses MTurk to obtain a proxy for the true labels %(damaging/non-damaging) 
of these $p$ posts. 

The adversary uses this labeled training data in the following optimization problem to update its parameters, 
\vspace{-1mm}
\begin{align} \label{eq:adversary}
\vtheta_{t} = \argmin_{\vtheta} \Ls_\text{NLL}(\vtheta; \sA_\dt^\delta) \:,
\end{align}
\vspace{-2mm}

\noindent where $\Ls_\text{NLL}$ is the standard negative log-likelihood loss for the classification task, given by, 
\vspace{-1mm}
\small
\begin{align*}
\Ls_\text{NLL}&(\vtheta; \sA_\dt^\delta) = \sum_{(x, y) \in \sA_\dt^\delta} -y \log\left(\adv(x; \vtheta)\right) - (1 - y) \log\left(1 - \adv(x; \vtheta)\right) \:. 
\end{align*}
\normalsize
After training, the adversary uses the {trained} model $a(~\cdot~; \vtheta_{t})$ to predict the labels of the rest of the deleted posts of time interval~$t$, i.e., $\cD_\dt^\delta \setminus \sA_\dt^\delta$ along with all the deleted posts that it had already predicted in the past. This way the adversary hopes to capture damaging posts that were missed earlier. Hence, we report the adversary's performance on all the past deletions (not including the training data): $\bigcup_{\dt' \leq \dt} (\cD_{\dt'}^\delta \setminus \sA_{\dt'}^\delta)$. 

\vspace{2mm}
\paragraph{Static vs Adaptive Adversary}
Since the static adversary has a limited budget, first it chooses the number of time intervals for training, say $\tau$, and accordingly samples $p = \budgetSA / \tau$ posts for querying MTurk to obtain labels. 

The adaptive adversary has a fixed recurring budget of $\budgetAA$ and hence, can sample $p=\budgetAA$ posts every interval. This allows the adaptive adversary to train itself with new training data (of size $\budgetAA$) every interval indefinitely. 
\cref{alg:adversary} depicts adversary's actions within a time interval (subscript $t$ removed for clarity).

\begin{algorithm}[t]
	\SetKwInOut{Input}{input}\SetKwInOut{Output}{output}
	\SetAlgoLined
	
	\Input{ $\cD^{\delset}$\tcc*[r]{Deleted posts in this interval}}

	%1
	Sample $p$ posts $\sA^\delta~\sample{p}~\cD^\delta$\;
	%2
	Query MTurk and obtain labels for $\sA^\delta$ \;
	%3
	Obtain optimal parameters $\vtheta^* $  by solving~\cref{eq:adversary} \;
	%4
	%$\{\hat{y_i}\} \leftarrow \adv(x_i ; \vtheta)$ for all $x_i \in  \cD^\delta \setminus \sA^\delta$ \;
	%5
	%$\{\hat{y_i}\} \leftarrow \adv(x_i ; \vtheta)$ for all $x_i \in \bigcup_{\dt' \leq \dt} (\cD_{\dt'}^\delta \setminus \sA_{\dt'}^\delta)$ \;
	%
	\Return $\adv(~\cdot~;\vtheta^*)$
	\caption{Adversary \label{alg:adversary} }
\end{algorithm}

\subsection{Challenger} \label{subsec:challenger}

In the presence of such an adversary, the challenger's goal is to collect volunteered posts (non-damaging) from users and selectively delete these posts in order to confuse the adversary. 

As described before, $\cD_t^\offeredset$ is the set of posts volunteered by users in the time interval $t$. Let $\sG_{\leq t}^\decoyset$ be the set of decoy posts deleted by the challenger in the current and past intervals. 
At the end of interval $t$, the challenger collects all the volunteered posts from the current and past intervals (except the posts that it has already used as decoys).  
The \textit{available} set of volunteered posts is denoted by $\cD_{\leq t}^\offeredset \equiv  (\bigcup_{\dt'\leq\dt} \cD_{\dt'}^\offeredset) \setminus (\bigcup_{\dt'\leq\dt} \sG_{\dt'}^\decoyset$).
Note that $(x, y) \in \cD_{\leq t}^\offeredset {\implies} y=0$, i.e., the volunteered posts are non-damaging by definition. For ease of notation, let $\Nofferedlt := |\cD_{\leq t}^\offeredset|$ be the number of volunteered posts collected till interval $t$.

Then, the goal of the challenger is to construct the decoy set $\sG^\decoyset_{t+1} \subseteq \cD_{\leq t}^\offeredset$ and delete these posts during the next time interval $t{+}1$ in order to fool the adversary into misclassifying these challenger-deleted non-damaging posts as user-deleted damaging posts.
Formally, we want to choose $K$ \decoy posts (denoted by a $K$-hot vector $\vw$) that maximizes the negative-log likelihood loss for the adversary's classifier, given by the following optimization problem,
\begin{align} \label{eq:optimization_problem}
\vw^* &= \argmax_{\vw} V(\vw; \cD_{\leq t}^\offeredset) \nonumber \\
\text{s.t. ~~~} & ||\vw||_1 = K, ~~~~~~ \vw \in \{0, 1\}^{\Nofferedlt} \:, 
\end{align}
\vspace{-2mm}
\noindent where
\vspace{-1mm}
\begin{align} \label{eq:V}
V(\vw; \cD_{\leq t}^\offeredset) &= \sum_{i=1}^{\Nofferedlt}  {-}\evw_i \cdot \log(1 - \adv(x_i; \vtheta_t)) \:,
\end{align}
and $x_i$ is the $i$-th volunteered post in $\cD_{\leq \dt}^\offeredset$.
The cost function $V(\vw; \cD_{\leq t}^\offeredset)$ in \cref{eq:V} is simply the negative log-likelihood of the adversary over the set $\cD_{\leq t}^\offeredset$ weighted by a $K$-hot vector $\vw$. \cref{eq:V} uses the fact that the set only contains  non-damaging posts (i.e., $y_i = 0$).

Consequently, $\vw^*$ optimized in such a fashion selects $K$ posts from the set $\cD_{\leq t}^\offeredset$ that \textit{maximizes} the adversary's negative log-likelihood loss. 
The set of $K$ selected posts can be trivially constructed as $\sG_{t+1}^* = \{x_i : i \in \{1, \ldots, \Nofferedlt \}\wedge w_i = 1\}$. 
The challenger deletes $\sG_{t+1}^*$ over the next time interval $t{+}1$ (hence the adversary sees these posts as part of the deleted set $\cD_{\dt+1}^\delta$). 
Note that the challenger uses the adversary's classifier $a(~\cdot~;~\theta_t)$ to create decoy posts for $t{+}1$. However, as per \cref{subsec:adversary}, in interval $t{+}1$ the adversary first trains over a sample of the deleted posts  (including the decoy posts) and updates its classifier to $a(~\cdot~; \theta_{t+1})$ before classifying the rest of the deleted posts of $t{+}1$. Hence, the challenger is always at a disadvantage (one step behind).

Next, we describe three \confusers corresponding to the access types discussed in \cref{subsec:threat_challenger}: \textit{no access}, \textit{black-box access} and \textit{monitored black-box access with a query budget}.

\begin{algorithm}[t]
	\SetKwInOut{Input}{input}\SetKwInOut{Output}{output}
	\SetAlgoLined
	
	\Input{ $\setDecoyPostsNotime, \; K, \; $ accessType}
%	\scm{Removed "adversary" as input. Assumed to be given from the access within each if-else block}
	%\KwResult{Write here the result }
	
	$\sG^\decoyset \leftarrow \emptyset$ \;
	%$\cD_{\leq \dt}^\offeredset  \leftarrow \cD_{< \dt}^\offeredset \cup \cD_\dt^\offeredset$ \;
	
	\uIf{accessType = none}{
	\tcc{Random challenger}
			$\sG^\decoyset~\sample{K}~\cD^\offeredset$ \;
	}
	\vspace{2mm}
	\uElseIf{accessType = black-box}{
	    \tcc{Oracle challenger}
% 		Query $\adv(x; \vtheta)$ for all $(x, y = 0) \in \cD^\offeredset$ \;
% 		$\sG^\decoyset \leftarrow$ $\{x_i\}$ for top $K$ highest values of $y_i$ \scm{?} \; 
		$\sG^* \leftarrow \{x_i : x_i \in \setDecoyPostsNotime \wedge \adv(x_i; \vtheta) \text{ is in the top } K \}$ \;
	}
	\vspace{2mm}
	\uElseIf{accessType = monitored black-box (budget $\budgetC$)}
	{
	    \tcc{\system challenger}
		% Divide $\setDecoyPostsNotime$ into $\setDecoyPostsTrainNotime$ and $\setDecoyPostsTestNotime$ \;
        Sample $\budgetC$ posts for training $\cD^{\offeredset,\subtrain} ~\sample{\budgetC}~\cD^\offeredset$\;
        $\cD^{\offeredset,\subtest} \leftarrow \cD^\offeredset \setminus \cD^{\offeredset, \subtrain}$  \;
		Query $\adv(x_i; \vtheta)$ for all $(x_i, y_i = 0) \in \setDecoyPostsTrainNotime$ \; 
		Obtain optimal parameters $\vphi^*$  by solving~\cref{eq:continuous_relaxation} \;
		$\sG^* \leftarrow \{x_i : x_i \in \setDecoyPostsTestNotime \wedge \con(x_i; \vphi^*) \text{ is in the top } K \}$ \;
	}
	\vspace{2mm}
	\Return $\sG^\decoyset$  \;
	
	\caption{Challenger \label{alg:challenger}}
\end{algorithm}

\paragraph{Random challenger (no access)}
We begin with the case where the challenger has \textit{no access} to the adversary's classifier and there is no side-information available to the challenger.
With no access to the adversary's classification probabilities $a(~\cdot~; \vtheta_t)$, the optimization problem in \cref{eq:optimization_problem} cannot be solved.
We introduce the naive \textit{random challenger} that simply samples $K$ posts randomly from the available volunteered posts $\cD_{\leq t}^\offeredset$ and deletes them, i.e., $\sG_{t+1}^*~\sample{K}~\cD_{\leq t}^\offeredset$. 
This is the only viable approach if the \confuser has no information about the adversary's classifier.

\paragraph{Oracle challenger (black-box access)}
Next we consider the challenger that has a black-box access to the adversary's classifier with no query budget, i.e., at any time interval $t$, the challenger can query the adversary with a post $x$ and expect the adversary's predicted probability $a(x; \vtheta_{t})$ in response without the adversary's knowledge. 
Armed with the black-box access, oracle \confuser can simply maximize \cref{eq:optimization_problem} by choosing the top $K$ posts with highest values for $\adv(x_i; \vtheta_{t})$.

\vspace{1mm}
\paragraph{\system \Confuser (monitored black-box access with query budget $\budgetC$)}
The oracle challenger assumes an \textit{unmonitored} black-box access to the adversary \textit{with an infinite query budget} which can be hard to obtain in practice. In what follows, we relax the access and assume a \textit{monitored} black-box access \textit{with a recurring query budget of $\budgetC$}. In other words, queries to the adversary, while being limited per interval, are also monitored and possibly flagged by the adversary. The adversary can simply take note of these queries as performed by a potential challenger, hence negating any privacy benefits from injecting \decoy posts. Whenever the adversary sees a \textit{deleted post} identical to one that it was previously \emph{queried} about, it can ignore the post as it is likely non-damaging. 

Here we design a challenger, henceforth dubbed \system, that \textit{trains to select decoy posts} from any given volunteered set. In other words, the \system challenger makes use of the monitored black-box access to the adversary only during training. Hence it can be used to find the decoy posts {without} querying the adversary; for example in a held-out volunteered set (separate from the training set). Additionally, the \system challenger queries the adversary for only $\budgetC$ posts every \epoch.

We denote the challenger's model at the beginning of interval $\dt$ by $\con(~\cdot~; \vphi_{t{-}1}) : \sX \rightarrow \sR$ parameterized by $\vphi_{t{-}1}$. For a given volunteer post $x$, $g(x; \vphi_{t{-}1})$ gives an unnormalized score for how likely the post will be mislabeled 
%by the adversary 
as damaging; higher the score, higher the misclassification probability. 

First, the \system challenger samples $\budgetC$ posts for training from the available volunteered set $\cD_{\leq t}^{\offeredset}$ collected till interval $t$. We denote the train and test sets of the \system challenger as $\cD_{\leq t}^{\offeredset, \subtrain}$ and  $\cD_{\leq t}^{\offeredset, \subtest}$ of sizes $\budgetC$ and $\Nofferedlt - \budgetC$ respectively. 
Then, the goal of the \system 
%challenger 
is to find optimal parameters $\vphi_{t}$ by solving a continuous relaxation of \cref{eq:optimization_problem} presented below,
\begin{align}\label{eq:continuous_relaxation}
\vphi_{t} = \argmax_{\vphi} \tilde{V}(\vphi; \setDecoyPostsTrain)
\end{align}

\vspace{-2mm}
\noindent where
\begin{align}
\tilde{V}(\vphi; \setDecoyPostsTrain) &= \sum_{i=1}^{\NofferedltTrain} {-} \alpha(x_i; \vphi, \setDecoyPostsTrain) ~ \log(1 - \adv(x_i; \vtheta_t))\:, \nonumber
\end{align}
\noindent and
\begin{align*} \label{eq:alpha}
\alpha(x_i; \vphi, \setDecoyPostsTrain) &= \frac{\exp{(\con(x_i; \vphi))}}
{\sum_{j=1}^{\NofferedltTrain} \exp{(\con(x_j; \vphi))}} \:,
\end{align*}
is a softmax over the challenger outputs for all the examples in $\setDecoyPostsTrain$. The softmax function makes sure that $0 \leq \alpha(~\cdot~; \vphi,  \setDecoyPostsTrain) \leq 1$ and $\sum_{j=1}^{\NofferedltTrain} \alpha(x_j; \vphi,  \setDecoyPostsTrain)  = 1$. The continuous relaxation in \cref{eq:continuous_relaxation} allows the \system challenger to train a neural network model parameterized by $\phi$ via backpropagation.

We now show that optimizing the relaxed objective in \cref{eq:continuous_relaxation} results in the best objective value for \cref{eq:optimization_problem}.
\begin{proposition}\label[proposition]{prop:confuser}
For any given volunteered set $\cD^\offeredset$ with $N$ non-deleted posts,
\vspace{-2mm}
$$
\max_{\vphi} \tilde{V}(\vphi; \cD^\offeredset) = \max_{w_1, \ldots, w_{N}} V(w_1, \ldots, w_{N}; \cD^\offeredset)
$$
\end{proposition}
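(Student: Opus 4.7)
The plan is to rewrite both objectives as inner products with the non-negative coefficients $c_i := -\log(1 - \adv(x_i; \vtheta_t))$, exploit the fact that $\alpha(\vphi)$ is a softmax (so $\alpha(\vphi)$ lies in the probability simplex $\Delta^{N-1}$), and show that both maxima equal $\max_i c_i$.

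Concretely, I would first note that $c_i \geq 0$ since $\adv(x_i; \vtheta_t) \in [0, 1]$, and write $\tilde{V}(\vphi; \cD^\offeredset) = \sum_i \alpha_i(\vphi)\, c_i$ and $V(\vw; \cD^\offeredset) = \sum_i w_i\, c_i$. Because $\alpha(\vphi)$ lies in the simplex, $\tilde{V}(\vphi)$ is a convex combination of the $c_i$'s, hence $\tilde{V}(\vphi) \leq \max_i c_i$. To show the bound is attained in supremum, fix $i^* = \argmax_i c_i$ and pick a sequence $\{\vphi_m\}$ with $g(x_{i^*}; \vphi_m) \to \infty$ while $g(x_j; \vphi_m)$ stays bounded for $j \neq i^*$; then $\alpha(\vphi_m) \to \mathbf{e}_{i^*}$, the standard basis vector, and $\tilde{V}(\vphi_m) \to c_{i^*}$. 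On the right-hand side, $V(\vw)$ is linear on the matching simplex domain and is maximized at the vertex $\vw = \mathbf{e}_{i^*}$, giving $\max_\vw V = c_{i^*}$. Chaining these equalities yields $\sup_\vphi \tilde{V} = \max_\vw V = \max_i c_i$.

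The main subtlety is identifying the correct domain for $\vw$. Equation~(2) constrains $\vw$ to be $K$-hot binary, in which case $\max V = \sum_{i \in \text{top-}K} c_i$, which coincides with $\max_i c_i$ only when $K = 1$. A literal reading of the proposition as an equality of maxima therefore requires the right-hand side domain to be the full probability simplex (equivalently, one-hot indicators), matching the image of the softmax. The intended operational content for general $K$---used by the \system challenger---is that the learned $g(\cdot; \vphi^*)$ ranks posts in the same order as $c_i$, so selecting the top $K$ by $g$ yields the same set as the $K$-hot optimizer $\vw^*$ of~(2); the attainment argument above essentially proves this ranking property, since driving up $g(x_{i^*};\vphi)$ is what maximizes the softmax-weighted sum.
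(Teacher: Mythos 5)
Your argument is correct and, in one important respect, more careful than the paper's own proof. The paper proceeds differently: it first asserts $\max_{\vphi}\tilde{V}\geq\max_{\vw}V$ on the grounds that the discrete optimum ``lies within the solution space of the continuous relaxation,'' and then sorts the values $L_i=\log(1-\adv(x_i;\vtheta_t))$ and argues by a case split on ties that the continuous optimum concentrates all weight on the top-$K$ elements, so the two optima coincide. Your route instead writes both objectives as inner products with $c_i=-L_i\geq 0$, bounds $\tilde{V}=\sum_i\alpha_i c_i\leq\max_i c_i$ by the convex-combination inequality, and exhibits a sequence $\vphi_m$ approaching the bound. What this buys is an explicit identification of both optima with $\max_i c_i$ and an honest treatment of attainment: the softmax image is the open simplex, so the left-hand side is a supremum that is generally not achieved at finite $\vphi$, a point the paper silently elides by writing $\max$.

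The domain subtlety you flag is a genuine defect in the stated proposition, not a quibble. The softmax forces $\sum_i\alpha(x_i;\vphi,\cdot)=1$, whereas feasible $\vw$ in the discrete problem satisfy $\|\vw\|_1=K$; hence the $K$-hot optimum is \emph{not} in the image of the relaxation, the containment invoked in the paper's first step fails for $K>1$, and the inequality in fact runs the other way: $\sup_\vphi\tilde{V}=\max_i c_i\leq\sum_{i\in\text{top-}K}c_i=\max_\vw V$, strictly whenever at least two of the top $K$ coefficients are positive. The proposition as written is therefore literally true only for $K=1$, or with the right-hand domain reinterpreted as one-hot vectors (equivalently the simplex), exactly as you propose. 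One caveat on your final remark: maximizing $\tilde{V}$ only forces the softmax mass onto the single largest $c_i$ and does not by itself pin down the relative order of $g(\cdot;\vphi^*)$ on the remaining posts, so the claim that the trained scorer ranks all posts consistently with $c_i$ --- which is what the top-$K$ selection step actually relies on for $K>1$ --- needs an additional argument rather than following from the attainment construction alone.
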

\vspace{-2mm}

We present proof of the proposition in~\cref{app:proofs}. 

Finally, the \system challenger with optimal parameters $\vphi_{t}$ computes $g(x; \vphi_{t})$ for all $(x, y=0) \in \cD_{\leq t}^{\offeredset, \subtest}$, and constructs $G_{t+1}^{*}$ by choosing the examples with top $K$ values for $g(~\cdot~; \vphi_{t})$. 
\cref{alg:challenger} shows the actions of the \confuser within a time interval (subscript $t$ removed for clarity).

\subsection{Deceptive Learning Game}
\cref{alg:game} presents the game between the adversary and the challenger. 
In each time interval, users independently delete and volunteer posts (line 4).
The platform/deletion-service additionally deletes the challenger-selected decoy posts (line 5).
The adversary obtains all the deleted posts and queries the MTurk with a small subset of the posts for labels (if the adversary has not exhausted the budget). With this labeled set of deleted posts, the adversary trains its classifier (lines 6-7). 
The challenger collects new volunteered posts (line 8) and builds decoy posts to be injected in the next interval (line 9). This results in a real-life game between the adversary and the challenger, where each adapts to the other. The \game is different from the adversarial learning approaches as we detail in \cref{sec:adversarial_learning}.

\begin{algorithm}[t]
	\SetKwInOut{Input}{input}\SetKwInOut{Output}{output}
	\SetAlgoLined
	\Input{~ accessType, $K$}
	$\sG_1^\decoyset \leftarrow \emptyset $ \;
	$\cD_{\leq 0}^\offeredset \leftarrow \emptyset$ \;
	%$\setDecoyPosts \leftarrow \emptyset$ \;

	\For{$t \leftarrow$ 1 \KwTo n }{
		
% 		$\cD_\dt^\damset, \cD_\dt^\ndamset, \cD_\dt^\offeredset \leftarrow $ Users($t$) \tcc*[r]{deleted and offered posts of the users at interval $t$}
 		$\cD_\dt^\delset, ~\cD_\dt^\offeredset \leftarrow $ Users($t$) \tcc*[r]{deleted and volunteered posts of the users at interval $t$} 
	
		\vspace{1mm}
		%$\cD_\dt^\delset \leftarrow \cD_\dt^\damset \cup \cD_\dt^\ndamset \cup \sG_\dt^\decoyset$ \tcc*[r]{user- and \confuser-deleted posts at interval \dt}
		
		$\cD_\dt^\delset \leftarrow \cD_\dt^\delset \cup \sG_\dt^\decoyset$ \tcc*[r]{user- and \confuser-deleted posts at interval \dt}

		\vspace{1mm}
		\uIf{Adversary's budget has not exhausted}
		{$\adv(~\cdot~,\vtheta_t)  \leftarrow \text{Adversary}(\cD_\dt^\delset)$ \;}
	    
	    \vspace{1mm}
	    $\setDecoyPosts \leftarrow (\cD^\offeredset_{\leq t-1} \setminus \sG_\dt^\decoyset) \cup \cD_\dt^\offeredset $
	    \tcc*[r]{available volunteered set}
		
		\vspace{1mm}
		$\sG_{\dt+1}^\decoyset \leftarrow \text{\Confuser} (\setDecoyPosts, K,  \text{accessType})$
	}
	\caption{Deceptive Game \label{alg:game}}
\end{algorithm}

\subsection{Analysis: Who Wins the Game?}\label{sec:whowins}
In what follows, we analyze the scenarios where either the adversary or the challenger wins the \game. 
We show that the volunteered set, $\cD^\offeredset$, plays a significant role in deciding the winner of the game. First, we need the definition of support of a distribution.
\begin{definition}[Support]\label{def:support}
	Let $\Omega = \{x : \forall x, p(x) > 0\}$ be the support of  distribution $p(x)$, i.e., the set of all possible features $x$ with non-zero probability.
\end{definition}

Let $p^\damset(x)$ be the distribution of the features of damaging posts, with the corresponding support denoted by $\Omega^\damset$. Then, a post $x$ is in $\Omega^\damset$ if there is a non-zero probability that it is a damaging post. Similarly, $\Omega^\offeredset$ is the support of the distribution of volunteered posts $p^\offeredset$.
Next, we analyze the two extreme scenarios of non-overlapping supports (i.e., $\Omega^\offeredset \cap \Omega^\damset = \emptyset$) and fully-overlapping supports (i.e., $\Omega^\offeredset = \Omega^\damset$). These extreme scenarios correspond to the following simple questions respectively: (a) ``\textit{what if all the posts volunteered by users have completely different features than the damaging posts?}'' and (b) ``\textit{what if the volunteered posts have very similar or same features as those of damaging posts?}''.

\begin{subsubsection}{Non-overlapping Support: Adversary Wins}

\begin{proposition}[{\em Non-overlapping} support]\label[proposition]{p:nonoverlapping}
	Assume $\Omega^\offeredset \cap \Omega^\damset = \emptyset$, i.e., the supports of volunteered and damaging posts do not overlap.
	Then, there is always a powerful-enough \adversary to defeat the \confuser. 
\end{proposition}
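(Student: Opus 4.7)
The plan is to exhibit a specific classifier that a powerful-enough adversary can implement to perfectly separate damaging from decoy deletions under the disjoint-support assumption. Since the proposition is existential (``there is always a powerful-enough adversary''), it suffices to display one winning strategy for the adversary rather than reason about the full minmax dynamics.

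First, I would define the indicator classifier $\adv^*(x) \coloneqq \1[x \in \Omega^\damset]$ and note that this is well-defined precisely because $\Omega^\offeredset \cap \Omega^\damset = \emptyset$: every deleted post falls unambiguously into exactly one of the two disjoint regions. Next, I would check the two types of deletions that appear in $\cD_\dt^\delset$. For user-deleted damaging posts, the features lie in $\Omega^\damset$ by definition of the support, so $\adv^*$ outputs $1$ and labels them correctly. For any challenger-deleted decoy post, the inclusion chain established in Section~IV.B, $\sG_{\dt}^\decoyset \subseteq \cD_{\leq \dt-1}^\offeredset \subseteq \Omega^\offeredset$, forces the features to lie outside $\Omega^\damset$, so $\adv^*$ outputs $0$ and again labels them correctly. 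This also handles the user-deleted \nondamaging posts, whose features lie in $\Omega^\offeredset$ and thus outside $\Omega^\damset$. Consequently $\adv^*$ attains precision and recall equal to $1$ on the adversary's test set $\bigcup_{\dt' \leq \dt}(\cD_{\dt'}^\delta \setminus \sA_{\dt'}^\delta)$, regardless of which subset $\vw^*$ the challenger chooses in Eq.~(2).

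It remains to argue that a ``powerful-enough'' adversary can actually realize $\adv^*$ within the learning framework of Section~IV.A. Here I would invoke a standard expressiveness assumption: let the adversary's hypothesis class be rich enough (e.g., a universal approximator such as a sufficiently wide neural network) to contain a function arbitrarily close to the indicator $\1[x \in \Omega^\damset]$. Because the MTurk-labeled training sample $\sA_\dt^\delta$ drawn from $\cD_\dt^\delta$ contains both damaging examples (from $\Omega^\damset$) and decoy/non-damaging examples (from $\Omega^\offeredset$) with opposite labels and disjoint feature supports, the population minimizer of the negative log-likelihood loss $\Ls_\text{NLL}$ in Eq.~(1) coincides with $\adv^*$ on $\Omega^\damset \cup \Omega^\offeredset$. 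With enough budget $\budgetSA$ (or $\budgetAA$ per interval) to cover both regions, the empirical minimizer $\vtheta^*$ converges to this Bayes-optimal classifier, so the adversary wins the game in the limit.

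The main obstacle is not technical but definitional: the statement hinges on what we mean by ``powerful-enough.'' I would make this explicit by formalizing the assumption that the hypothesis class contains (or can approximate) the indicator of $\Omega^\damset$, which is mild because the two supports are disjoint and hence the separating function exists. Once this is granted, the proof reduces to the three-line argument above; no delicate analysis of the game dynamics, the challenger's access type, or the choice of $K$ is needed, since the adversary's perfect score is achieved pointwise on every possible deletion set the challenger could produce.
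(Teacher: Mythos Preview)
Your proposal is correct and follows essentially the same approach as the paper's proof sketch: both exhibit the Bayes-optimal separator $\1[x\in\Omega^\damset]$, observe that decoys drawn from $\Omega^\offeredset$ can never cross into $\Omega^\damset$ under the disjoint-support hypothesis, and then invoke universal approximation to argue that a sufficiently expressive neural network with enough data can realize this separator. Your write-up is simply a more explicit version of the paper's two-sentence sketch.

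One small over-claim to tighten: you assert that user-deleted \nondamaging posts have features in $\Omega^\offeredset$, but the paper only defines $\Omega^\offeredset$ as the support of \emph{volunteered} posts, not of all non-damaging deletions; a user-deleted non-damaging post (e.g., a typo fix) need not lie in $\Omega^\offeredset$. This does not affect the proposition, which is about defeating the \confuser (i.e., separating damaging from \emph{decoy} deletions), so you can simply drop that sentence rather than repair it.
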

\begin{proof}[Proof sketch]
	Assume the most powerful \confuser who can select any post features $x$ from an infinite supply of volunteered posts. However, since $\Omega^\offeredset \cap \Omega^\damset = \emptyset$, there is no sampling from $p^\offeredset$ to generate decoy examples that look like they are sampled from $p^\damset$.  Hence, given enough data, an \adversary can find a perfect decision boundary between the damaging posts and the decoy posts. Because neural networks are universal function approximators~\cite{hornik1989multilayer}, this powerful \adversary always exists and, thus, the \confuser can always be defeated in the \game.
\end{proof}

\textit{An Illustrative Example: } Consider the example provided in \cref{fig:halfmoon_adv_decision}. The two classes (denoted by red circles and green crosses respectively) have non-overlapping support. We show the decision boundary of the adaptive adversary in this setting dataset after 50 intervals of the \game. We see that the adversary can perfectly label the points even in the presence of the oracle challenger. %(similar result for the oracle challenger).

\textit{Real-world scenario: } The non-overlapping case could happen in an online social platform if its users are {very conservative} in volunteering posts to the challenger. Consider for example, none of the volunteered posts contained {any} sensitive keyword, whereas all the damaging posts had at least one sensitive keyword, a clear case of non-overlapping supports. In such a scenario, the adversary will win the game as detailed above. 

\end{subsubsection}

\begin{subsubsection}{Fully-overlapping Support: Challenger Wins}

\begin{proposition}[Fully overlapping support]\label[proposition]{p:fullyoverlapping}
	Assume $\Omega^\offeredset=\Omega^\damset$, i.e., the supports of volunteered and damaging posts fully overlap.
	Then, given enough volunteered posts in $\cD^\offeredset$, the \confuser  always defeats the \adversary (in both static and adaptive scenarios).
	More precisely, if the \confuser selects $k$ decoys per damaging post in $\cD^\delset$, then the  \adversary's probability of identifying a damaging post in $\cD^\delset$ is in average at most $1/(k+1)$. 
\end{proposition}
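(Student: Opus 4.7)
The plan is to exploit the equality of supports to show the challenger can make decoys statistically indistinguishable from damaging posts, and then invoke the Bayes-optimal classifier bound to conclude.

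First, I would argue that when $\Omega^\offeredset = \Omega^\damset$, the density $p^\damset$ is absolutely continuous with respect to $p^\offeredset$, so the Radon-Nikodym derivative (likelihood ratio) $r(x) = p^\damset(x)/p^\offeredset(x)$ is well defined on the common support. The optimal challenger in \cref{eq:optimization_problem} selects decoys that maximize $-\log(1 - \adv(x;\vtheta_t))$, i.e., the posts that the adversary rates as most damaging-like; given enough volunteered posts, this acts as importance-weighted resampling that converges (in distribution) to drawing $k$ decoys per damaging post from $p^\damset$ itself. This is the critical approximation step: as $|\cD^\offeredset| \to \infty$, every region with $p^\damset(x) > 0$ is eventually covered with enough candidates that the decoys can mimic the damaging distribution arbitrarily well.

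Next, I would set up the Bayes-optimal analysis. Let $Y \in \{\damset, \decoyset\}$ indicate whether a post in $\cD^\delset$ is damaging or a decoy. By the previous step, both classes have the same conditional density: $p(x \mid Y = \damset) = p(x \mid Y = \decoyset) = p^\damset(x)$. By construction, $P(Y = \damset) = 1/(k+1)$ and $P(Y = \decoyset) = k/(k+1)$. Applying Bayes' rule yields, for every $x$ in the support,
\begin{equation*}
P(Y = \damset \mid x) \;=\; \frac{p^\damset(x)\cdot \tfrac{1}{k+1}}{p^\damset(x) \cdot \tfrac{1}{k+1} + p^\damset(x)\cdot \tfrac{k}{k+1}} \;=\; \frac{1}{k+1}.
\end{equation*}
Because the posterior is the constant $1/(k+1)$, no classifier — regardless of architecture or training data — can outperform the prior. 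Any adversary $\adv(\cdot;\vtheta)$ that predicts a deleted post as damaging is correct with probability at most $1/(k+1)$, averaged over $\cD^\delset$, which is exactly the stated bound.

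The main obstacle is making the first step rigorous: the challenger's selection rule in \cref{eq:optimization_problem} is \emph{greedy} with respect to the current adversary $\adv(\cdot;\vtheta_t)$ rather than an explicit importance sampler, so I must argue that at the game's equilibrium this greedy rule still induces a decoy empirical distribution that matches $p^\damset$. One way to close this gap is to observe that if the decoy distribution did not match $p^\damset$, then some powerful adversary would achieve $P(Y=\damset\mid x) > 1/(k+1)$ on a positive-measure region, giving the challenger strictly positive gradient to shift decoy mass into that region — so at equilibrium the two conditional densities must agree almost everywhere on $\Omega^\damset$. A secondary, more mundane point is handling the static-versus-adaptive distinction, but since the bound depends only on the equilibrium posterior (not the training dynamics), the same argument covers both cases.
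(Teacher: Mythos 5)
Your proof is correct and follows essentially the same route as the paper's: both arguments hinge on the observation that, because $\Omega^\offeredset=\Omega^\damset$, the challenger can resample volunteered posts so that the decoy feature distribution matches $p^\damset$ (the paper makes this explicit via an acceptance--rejection construction, you via importance resampling plus an equilibrium argument), and both then apply Bayes' rule with class priors $1/(k+1)$ and $k/(k+1)$ to conclude that the posterior probability of a deleted post being damaging is at most $1/(k+1)$. The gap you flag---that the greedy selection rule in \cref{eq:optimization_problem} is only an approximation to the ideal sampler---is present in the paper's proof as well, which handles it no more rigorously than you do (by appeal to noise-contrastive estimation and the claim that the trained challenger mimics the adversary's classifier).
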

\begin{proof}[Proof sketch]
	The proof relies on a property of rejection sampling, which states that if the support of two distributions $p_1$ and $p_2$ fully overlap, then one can selectively filter samples from $p_1$ to make the filtered samples have distribution $p_2$ (a proof of this principle is given in the Appendix). 
	Asymptotically, for each damaging example $x$ in adversary's test data, there are $k$ indistinguishable decoy examples (from the adversary's perspective).
	This is because, by Bayes theorem 
	\begin{align*}
	 &p^\delset(y=1|x) =\\
	&\frac{p^\delset(x | y =1) p^\delset(y=1)}
	{p^\delset(x | y =1)p^\delset(y=1) + p^\delset(x | y =0)p^\delset(y=0)}
	\leq \frac{1}{1 + k},
	\end{align*}
	where the superscript $p^\delset$ indicates the distribution of deleted posts $\cD^\delset$.
	The inequality holds by construction, as for all $x \in \cD^\delset$ with label one, there are at least $k \geq 1$ samples from $p^\offeredset(x)$ with label zero.
\end{proof}

\textit{An Illustrative Example: } Consider the example provided in \cref{fig:gaussian_adv_decision} where the two classes (red circles and green crosses respectively) have fully overlapping supports (as they are drawn from a Gaussian distribution with different means). We show the decision boundary of the adaptive adversary in this setting after 50 intervals of the \game. We see that for any decision boundary, there exist points in $\Omega^\offeredset$ that a challenger can choose such that the adversary mislabels them as damaging. 

\textit{Real-world scenario: } The fully-overlapping case could happen in an online social platform if the definition of what constitutes as damaging varies across the platform's users. For example, user $A$ could consider a post with a single sensitive word (e.g., a swear word) as damaging, whereas another user $B$ from a different background might consider the same post as completely innocuous and volunteer the post. In such a scenario, the challenger will use volunteered posts from user $B$ to protect the damaging posts of user $A$. Hence, the challenger will win the game against even the most powerful adversary with infinite data.

\end{subsubsection}

\cref{p:fullyoverlapping,p:nonoverlapping} are important to understand the two extreme cases ---where either the \confuser clearly wins or the \adversary clearly wins--- as important insights, even though these clear-cut cases are unlikely to happen in practice. 
Most real-world applications will likely fall between these two extremes, where the supports only partially overlap. 
In such scenarios, the adversary wins outside the overlap (i.e., can classify everything correctly outside the overlap), and the challenger wins inside the overlap. In other words, extremely sensitive and damaging posts cannot be protected as they will have no overlap with any of the volunteered posts. However, as we show in the next section, with a reasonable volunteered set, the challenger can make it hard for the adversary to detect damaging deletions.

\begin{figure}[t]%
	%\centering
	\begin{subfigure}{0.44\columnwidth}
		\includegraphics[height=1.4in]{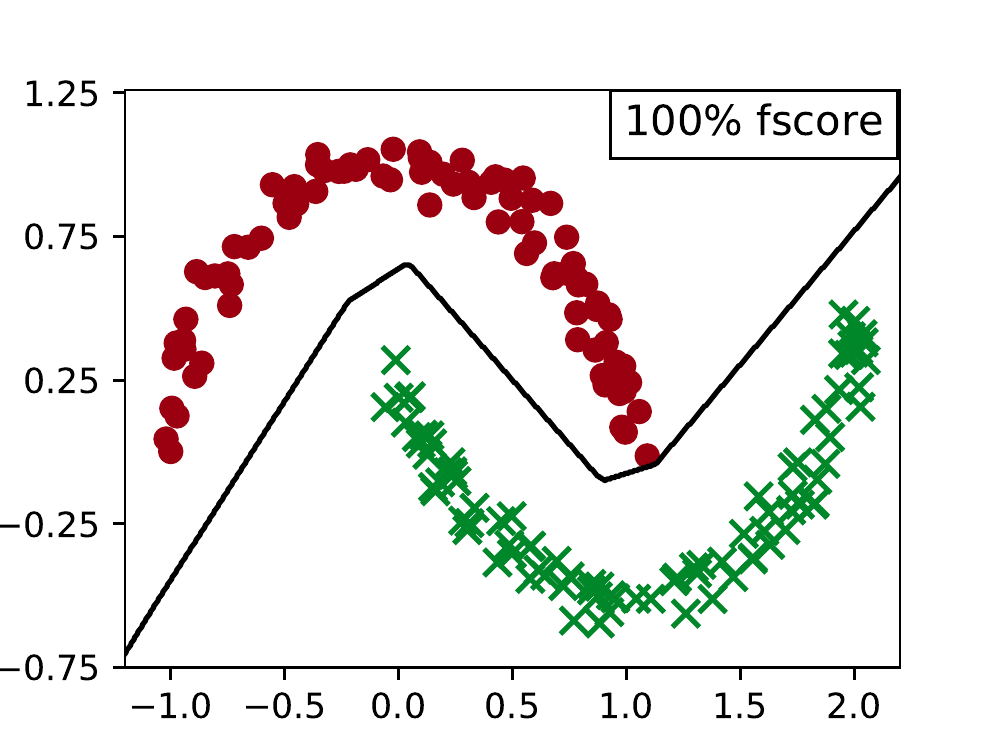}
		\caption{Non-overlapping supports\label{fig:halfmoon_adv_decision} }
	\end{subfigure}
	~~~~~
	\begin{subfigure}{0.44\columnwidth}
	\includegraphics[height=1.4in]{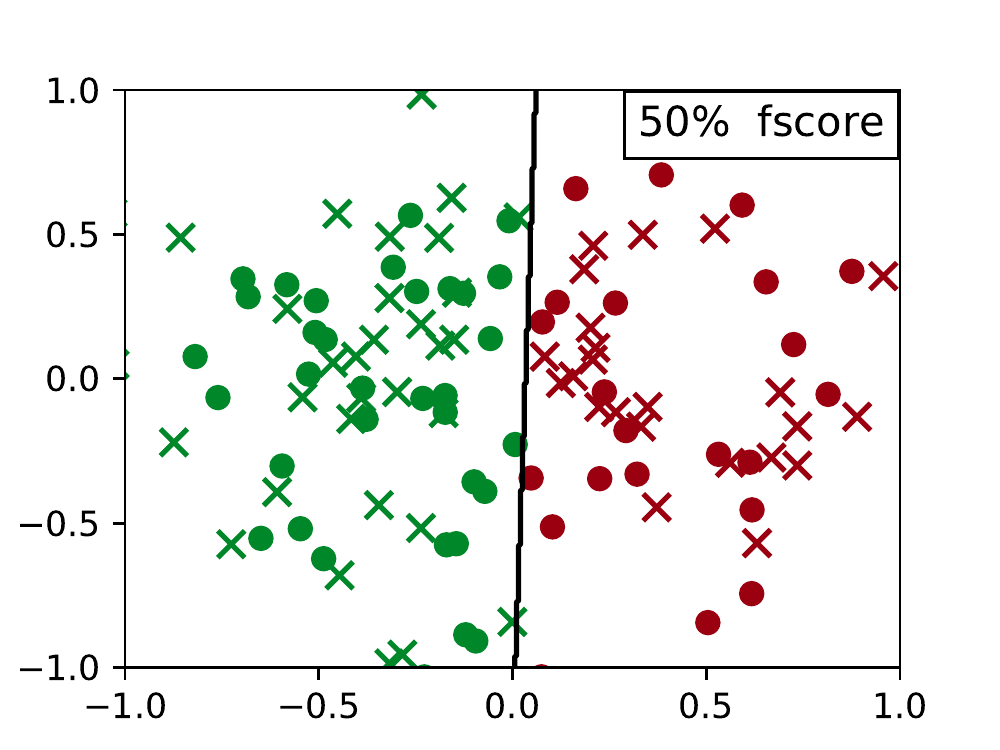}
	\caption{Fully overlapping supports\label{fig:gaussian_adv_decision} }
	\end{subfigure}
    \caption{Two examples illustrating the two possible scenarios relating to the supports of volunteered posts and damaging posts: non-overlapping (left) and fully overlapping (right). The black line denotes the decision boundary of the adaptive adversary after 50 intervals of the \game. \label{fig:adv_decision} }
    \vspace{-3mm}
\end{figure}

\section{System Evaluation on Twitter Deletions}\label{sec:exp}

In this section we evaluate the efficiency of an adversary when \sysname is applied to the real-world problem of concealing damaging deletions in Twitter.
In this evaluation we first create and prepare sets of (non)damaging tweets. Then we use these sets to train the \confuser and adversary classifiers and analyze their performance.

\subsection{Data Collection} \label{subsec:data}
In this work, we select Twitter as our experimental social media platform.
We note that it was certainly plausible to perform the exact experiment on other social platforms. However we chose Twitter due to its popularity and feasibility of data collection.
Specifically, in order to evaluate the \confuser we needed a real-world dataset which includes (i) both deleted and non-deleted tweets (i.e., Twitter posts) and (ii) deleted tweets that contain both damaging and non-damaging tweets. 
To that end, we use two data sources to create such a dataset. 

\subsubsection{Deceptive Deletion dataset}
We collected 1\% of daily random tweet samples from the Twitter API from Oct 2015 - May 2018. 
Eliminating  non-English tweets, we accumulated over one billion tweets. 
In the next step, we construct the damaging and volunteered sets.

To construct the damaging set, we first needed to identify the deleted tweets.
We sampled 300,000 tweets from the aforementioned collected data, and leveraging the Twitter API, we identified the tweets that were deleted at the time of our experiment (Jan 30th, 2020). 
In total, we identified 92,326 deleted tweets.
The next step was to obtain ground truth labels for the deleted tweets---i.e., detect and assign ``true'' labels to damaging tweets and ``false'' labels to rest.
We used the crowdsourcing service Amazon Mechanical Turk (MTurk)~\cite{amt}  to obtain a proxy for these true labels. 
However, there were two challenges-- First, it was impractical to ask our annotators to label 92,326 tweets. Second, since the dataset was highly imbalanced, a simple random sample of tweets for labeling would have resulted in a majority of non-damaging tweets. 

Thus, we followed prior work to create a more balanced sample dataset~\cite{wang2019donttweetthis, zhou-2016-regrettableDelTweet}. 
Specifically, we filtered the deleted tweets using a simple sensitive keyword-based approach~\cite{zhou-2016-regrettableDelTweet} (i.e., identify posts with sensitive keywords) to have a higher chance of collecting possibly damaging tweets.
This approach resulted in 33,000 potentially damaging tweets, and we randomly sampled 3,500 tweets
to be labeled by annotators on MTurk.
The mean number of sensitive keywords in each tweet within our data set was $2.55$. 

Note that, in addition to the cursing and sexual keywords, our sensitive keyword-based approach also considered keywords related to the topics of religion, race, job, relationship, health, violence, etc. Intuitively, if a post does not contain any such sensitive keywords then the likelihood of the post being damaging is very low. We confirmed this intuition by asking MTurk annotators to label 150 tweets which did not contain any sensitive keyword as damaging/non-damaging. We noted that more than 97\% of these 150 tweets were labeled as non-damaging by annotators.
We surmised that in practice, the adversary will also leverage a similar filtering approach to reduce its overhead and increase its chances of finding damaging posts.

In total, out of our sampled 3,500 deleted tweets, we obtained labels for 3,177 tweets (excluding annotations from Turkers who failed our quality control checks as described later). 
Among the labeled tweets, 1,272 were identified as damaging, and $1,905$ were identified as non-damaging.

\paragraph{Data labeling using MTurk}
We acknowledge that ideally, the tweet labels should have been assigned by the posters themselves. 
However, since we collected random tweets at large-scale using the Twitter API, 
we could not track down and pursue original posters to label their deleted tweets. 
To that end, we note that there is a crowdsourcing based alternative which is already leveraged by earlier work to assign sensitivity labels~\cite{correa-2015-anonymityShades,biega-2016-rsusceptibility,wang2019donttweetthis}. 
Specifically, these studies determined the sensitivity of social media posts by simply aggregating crowdsourced sensitivity labels provided by multiple MTurk workers (Turkers). Thus we took a similar approach as mentioned next.

On MTurk, tasks (e.g., completing surveys) are called Human Intelligence Tasks or HITs. 
Turkers can participate in a survey by accepting the corresponding HIT only if they meet all the criteria associated with that HIT (set by the person(s) who created the HIT). We leverage this feature to ensure the reliability of our results. Specifically we asked that the Turkers taking our survey should: 
(i) have at least 50 approved HITs.
(ii) have an assignment approval rate higher than 90\%, and 
(iii) have their location set to United States. 
This last criterion ensured consistency of our Turkers' linguistic background. 
In our experiment each HIT consisted of annotating 20 tweets with true (damaging) or false (non-damaging) labels. 
We allowed the Turkers to skip some tweets in case they feel uncomfortable for any reason. 
We compensated 0.5 USD for each HIT and on average it took the Turkers 193 seconds to complete each HIT. 

To control the quality of annotation by Turkers, we included two hand-crafted control tweets with known labels in each HIT. These control tweets were randomly selected from two very small sets of clearly non-damaging or damaging tweets and were inserted at random locations within the selection of 20 tweets.
For example a damaging control tweet was: \textit{``I think I have enough knowledge to make a suicide bomb now! Might need it New Year's Eve"} and non-damaging control tweet was: \textit{``Prayers with all the people in the hurricane irma"}. 
If for a HIT, the responses to these control tweets did not match the expected label, we conservatively discarded all twenty annotations in that HIT.

We countered possible bias resulting from the order of presentation of tweets via randomizing the order of tweets in every HIT. 
Even if two Turkers annotated the same set of tweets, the order of those tweets was different. 
Furthermore, to ease the subjectivity of the labels from each participant, for each tweet we collected the annotations of multiple Turkers 
%(by including it into multiple HIT) 
and took the majority vote.
In our experiment, we created the HITs such that each tweet was annotated by 3 distinct Turkers.
After receiving the responses, for each tweet we assigned the final label (indicating damaging or non-damaging) based on the majority vote.

We emphasize that in the real world, the burden of labeling the posts via crowdsourcing is on the adversary. The challenger, on the other hand, can be implemented as a service within the platform and can obtain the true labels directly from the post-owners. Therefore, existence of any mislabeled data will negatively impact only the adversary.

\subsubsection{\#Donttweet dataset}
Recently Wang et al.~\cite{wang2019donttweetthis} proposed ``\#Donttweetthis''. 
``\#Donttweetthis'' is a quantitative model that identifies potentially sensitive content and notifies users so that they can rethink before posting those content on social platforms.
Wang et al. created the training data for their model by (i) identifying possibly sensitive tweets by checking for the existence of sensitive keywords within the text and then (ii) using crowd-sourcing (i.e., using MTurk) to annotate the sensitivity of each tweet by three annotators. 

The data collection approach used by ``\#Donttweetthis'' (section 3 of~\cite{wang2019donttweetthis}) is very similar to ours. 
Therefore, to enrich our dataset and be able to evaluate the \confuser over more intervals, we acquired their labeled tweets. Using the Twitter API, we queried the tweets using their corresponding IDs and identified the deleted ones (at the time of writing, Jan 30th, 2020).
In total, we obtained $851$ deleted tweets, where $418$ were labeled as sensitive (damaging), and the remaining $433$ were labeled as non-sensitive (non-damaging). 
The mean of sensitive keywords in each tweet within this set was $1.7$. 

%\vspace{1mm}
\paragraph{Summary of collected data}
In summary, combining the two datasets explained above, we obtained labels for $4,028$ deleted tweets establishing the user deleted set. Among the deleted tweets $1,690$ were labeled as damaging constructing our damaging set ($\cD^\damset$). 
As we will demonstrate  in the results section, in our evaluation the four thousand labeled tweets (larger than that of prior works~\cite{wang2019donttweetthis,zhou-2016-regrettableDelTweet}) allows for $10$ intervals for the game between the adversary and challenger.

Furthermore, for our experiment, we consider $k = 1,2,5$ (i.e., number of decoy posts for each damaging post).
We needed to accommodate these values of $k$ and also a  pool that the challenger can make meaningful selections from.  
 Thus, we sampled $100,000$ non-deleted tweets uniformly at random from the 1\% of daily random tweet sample posted between Jan 1st, 2018 -- May 31st, 2018 to build the volunteered set.
 The non-deleted tweets are assumed to be non-damaging. We consider this assumption to be reasonable as if a tweet contains some damaging content then its owner would not keep that post on its profile.
 In practice, we can forgo this assumption as the volunteer users themselves offer the volunteer posts.
The average number of sensitive keywords in each tweet in this set was $0.41$.

\subsection{Ethical Considerations}
	Recall that in order to create our evaluation dataset we needed to show some deleted tweets to Turkers for the annotation task. Thus, we were significantly concerned about the ethics of our annotation task. Consequently, we discussed at length with the Institutional Review Board (IRB) of the lead author's  institute and deployed the annotation task only after we obtained the necessary IRB approval. Next we will detail, how, in our final annotation task protocol we took quite involved precautionary steps for protecting the privacy of the users who deleted their tweets.

	We recognize that, in the context of our evaluation, the primary risk to the deleted-tweet-owners was the possibility of linking deleted tweets with deleted-tweet-owner profiles during annotation. This intuition is supported by prior research~\cite{maddock2015using,mondal-2016-longitudinal-exposure} who suggested applying selective anonymization for research on deleted content. Thus, we anonymized all deleted tweets by replacing personally identifiable information or PII (e.g., usernames, mentions, user ids, and links) with placeholder text.
	For example, we replaced user accounts (i.e., words starting with @) and url-links with ``UserAccount" and ``Link'' respectively.
	Moreover, one of the authors manually went over each of these redacted posts to ensure anonymization of PII before showing them to Turkers.

%!TEX root=main.tex
\subsection{Experiment Setup}\label{sec:exp_setup}

\paragraph{Partitioning the data for different time intervals}
Recall from \cref{sec:sys_threat_model} that we discretize time into intervals. In our experiments, we choose $T=10$ intervals in total (a choice made based on the number of collected tweets). Consequently, we partition our dataset into 10 intervals. Ideally, the partitions should be based on the creation and deletion timestamps of the tweets. 
Unfortunately however, the Twitter API does not provide deletion timestamps. Hence, we randomly shuffle the tweets and divide them into 10 equally sized partitions. 

\vspace{1mm}
\paragraph{BERT model} 
In line with our approach to model the most-powerful adversary as best as we possibly can, we use a state-of-the-art natural language processing model: the BERT (Bidirectional Encoder Representations from Transformers) language model~\cite{devlin2018bert}, both for the adversary and for the challenger. Specifically, we use $\text{BERT}_\text{BASE}$ model that consists of 12 transformer blocks, a hidden layer size of 768 and 12 self-attention heads (110M parameters in total). BERT has been shown to perform exceedingly well in a number of downstream NLP tasks \cite{devlin2018bert}.
We use HuggingFace's \cite{Wolf2019HuggingFacesTS} implementation of the BERT model that was already \textit{pre-trained} on masked language modeling and next sentence prediction tasks.

BERT uses WordPiece embeddings \cite{wu2016google} to convert each word in the input tweet to an embedding vector. The concatenated embedding vector is passed to the BERT neural network model. In our experiments, we only give the text of the tweet as input to both the adversary and the challenger to make it amenable to the pre-trained BERT models. Other tweet features such as deletion timestamps, number of likes, etc. could be used by both the adversary and the challenger to improve their performance. Note however that \cref{p:nonoverlapping,p:fullyoverlapping} still apply as long as the adversary and the challenger have the same information.

We \textit{fine-tune} the BERT model on our datasets as prescribed by Devlin et al \cite{devlin2018bert}. In each interval, the adversary's classifier is fine-tuned for the classification of tweets into damaging and non-damaging using the negative log-likelihood loss in \cref{eq:adversary}. We use a batch size of 32 and sample equal number of damaging and non-damaging tweets in each batch. This procedure results in better trained models as it avoids the scenario where a randomly sampled batch is too imbalanced (for example, no damaging tweet sampled in the batch).
A separate BERT model is fine-tuned for the challenger using the loss function in \cref{eq:continuous_relaxation}. Note that no balancing is required here since all the input tweets to the challenger model are non-damaging.
We note that explaining the exact strategy employed by BERT models to classify text is an active research topic and complementary to our efforts. However, we highlight that our challenger does not use any information about either the adversary's exact model or its parameters.

\emph{Budget constraints: }
We allow a limited budget of $\budgetSA=200$ deleted tweets for the static adversary and set $\tau=1$, i.e., the static adversary only trains during the first out of the ten intervals. Similarly for the adaptive adversary, we allow a fixed recurring budget of $\budgetAA=200$ deleted tweets every interval. 
There are no budget restraints for random and oracle challengers (having no access and black-box access respectively). However, we restrict the \system challenger to have the same (recurring) query budget as the adaptive adversary's recurring budget to keep the game fair, i.e., $\budgetC = \budgetAA = 200$. %\todo{Total queries performed for each challenger (includes train and test queries).}

We simulate the \game described in \cref{alg:game} with an adversary and a challenger, both implemented as BERT language models, with 10 different random seeds. 
We repeat the experiments for $k=1, 2, 5$ where $k$ denotes the number of decoy posts added per damaging deletion.

\subsection{Results} \label{sec:results}
\cref{fig:twitter_results_adversary,fig:entire_results_figure} show the F-scores (with 95\% confidence intervals), precision and recall for different adversaries over 10 time intervals.
\noindent We make the following key observations from the results.

 \begin{figure}
	\centering
	\includegraphics[scale=0.7]{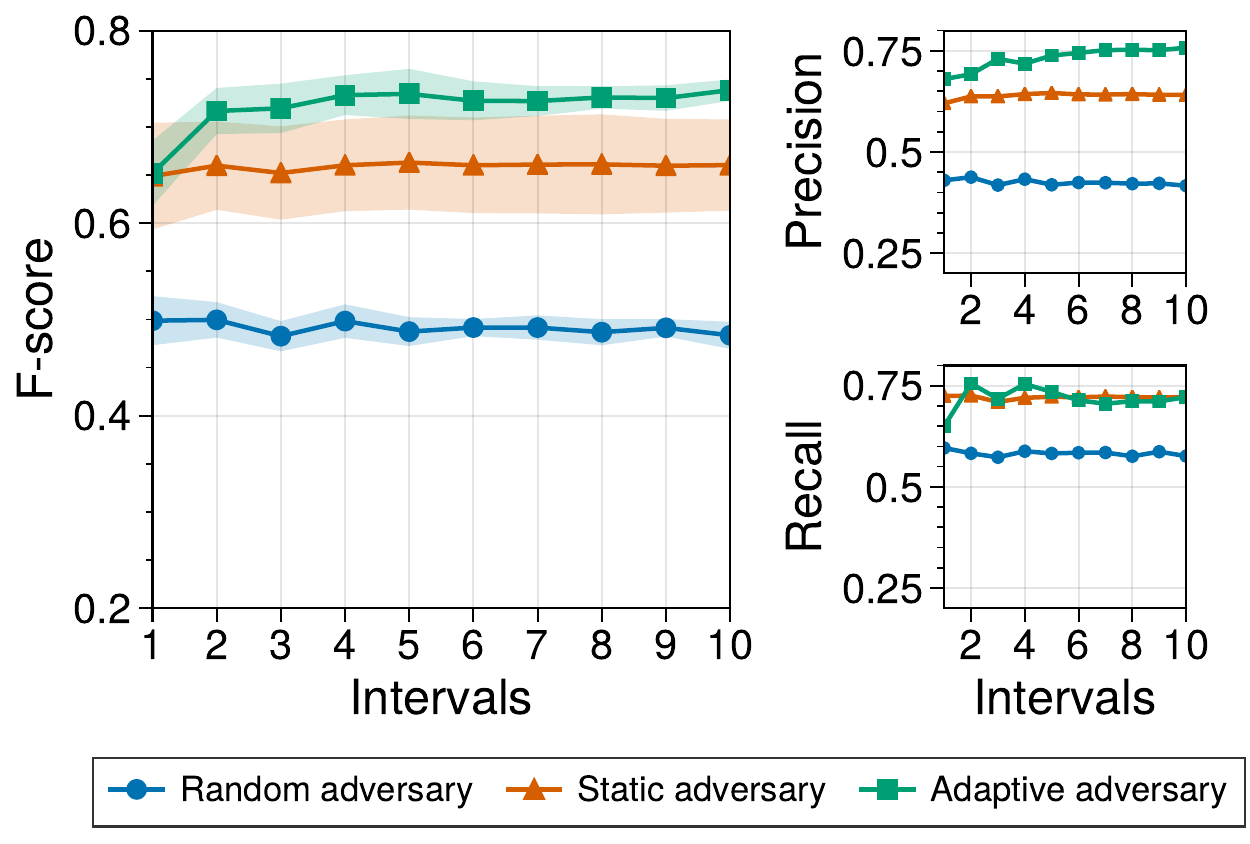}
	\caption{F-score of different adversaries (random, static, adaptive) when no privacy preserving deletion mechanism is in place. Shaded areas represent 95\% confidence intervals.}
	\label{fig:twitter_results_adversary}
\end{figure}

\newcommand{\figurescale}{0.45}
\newcommand{\figurewidth}{0.32}

\begin{figure*}[!ht]
    \centering
    
    \begin{subfigure}{0.5\textwidth}
        \hspace{-10mm}
		\includegraphics[scale=0.4]{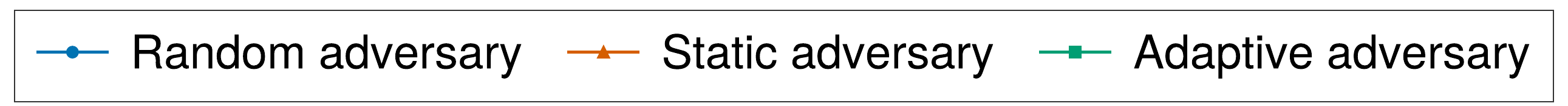}
		%\caption{\system challenger ($k = 5$)}
	\end{subfigure}
	\vspace{0.2in}
 \begin{subfigure}{\textwidth}
 	\centering
     \begin{subfigure}{\figurewidth\textwidth}
 		\includegraphics[scale=\figurescale]{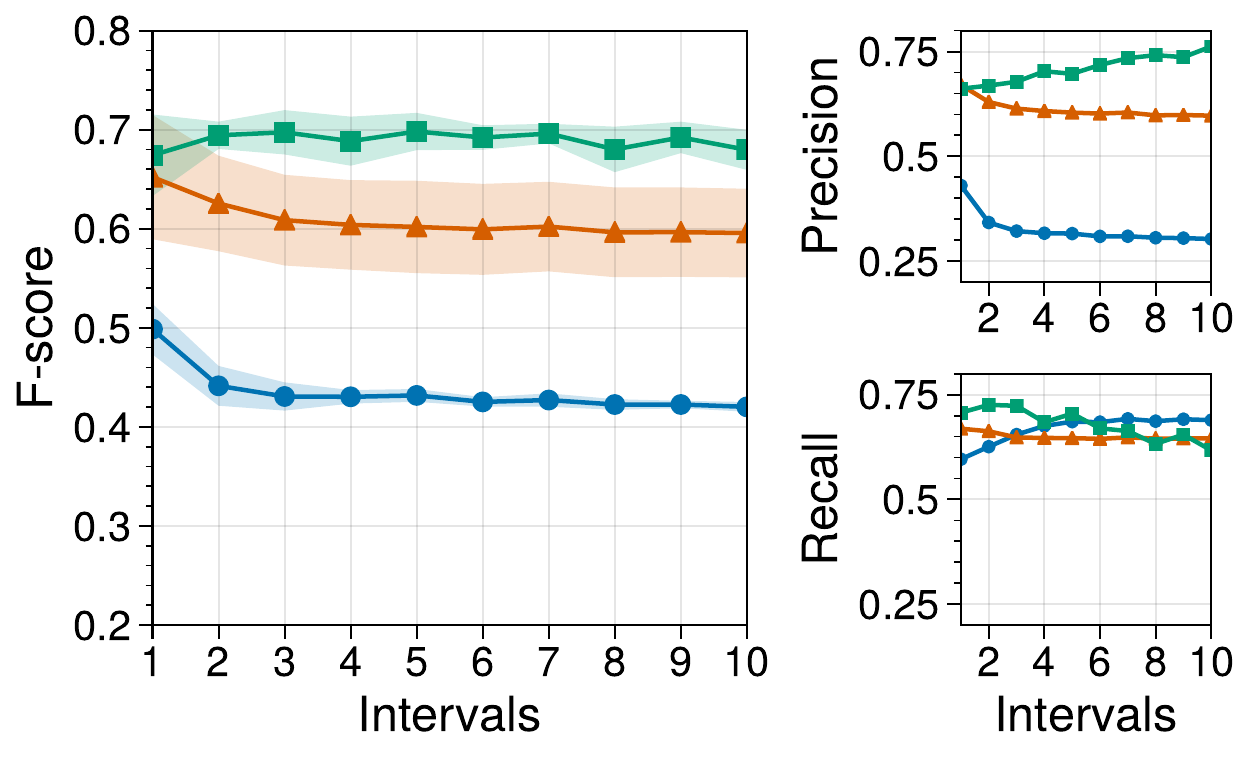}
 		\caption{Random challenger ($k = 1$)}
 	\end{subfigure}
 	\begin{subfigure}{\figurewidth\textwidth}
 		\includegraphics[scale=\figurescale]{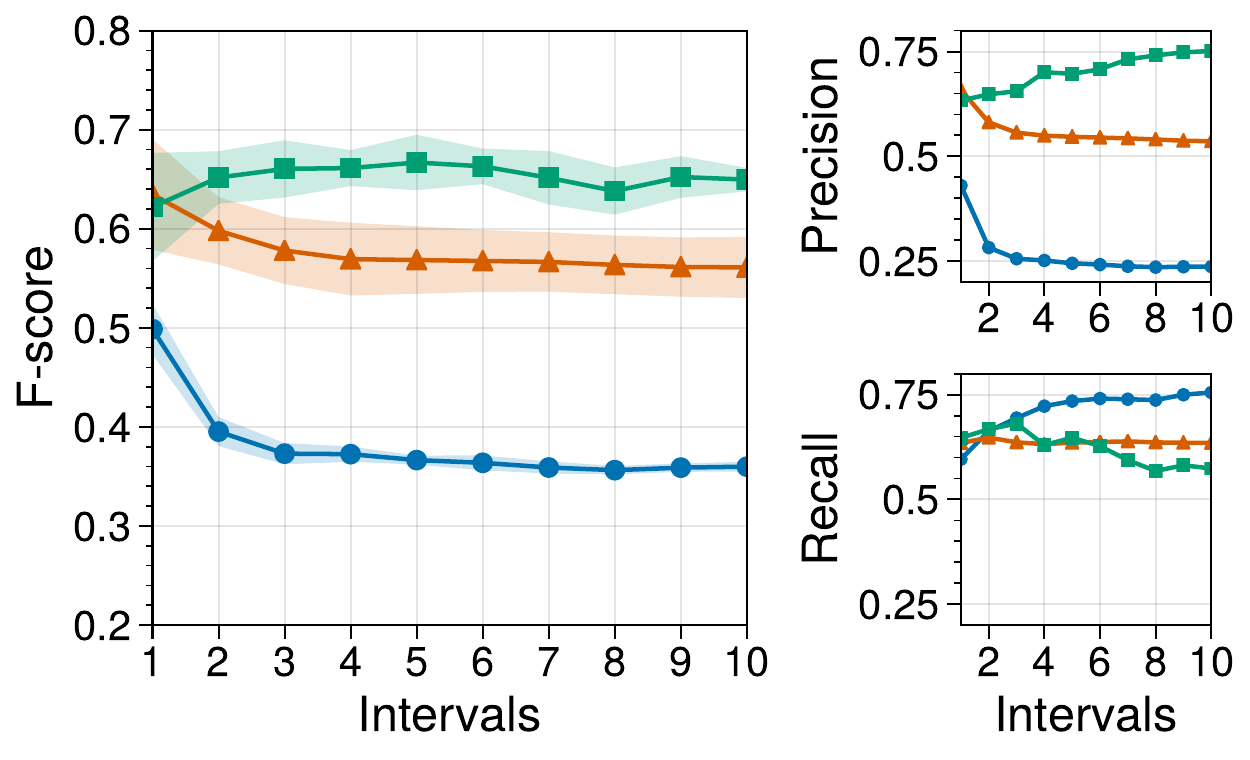}
 		\caption{Random challenger ($k = 2$)}
 	\end{subfigure}
 	\begin{subfigure}{\figurewidth\textwidth}
 	\includegraphics[scale=\figurescale]{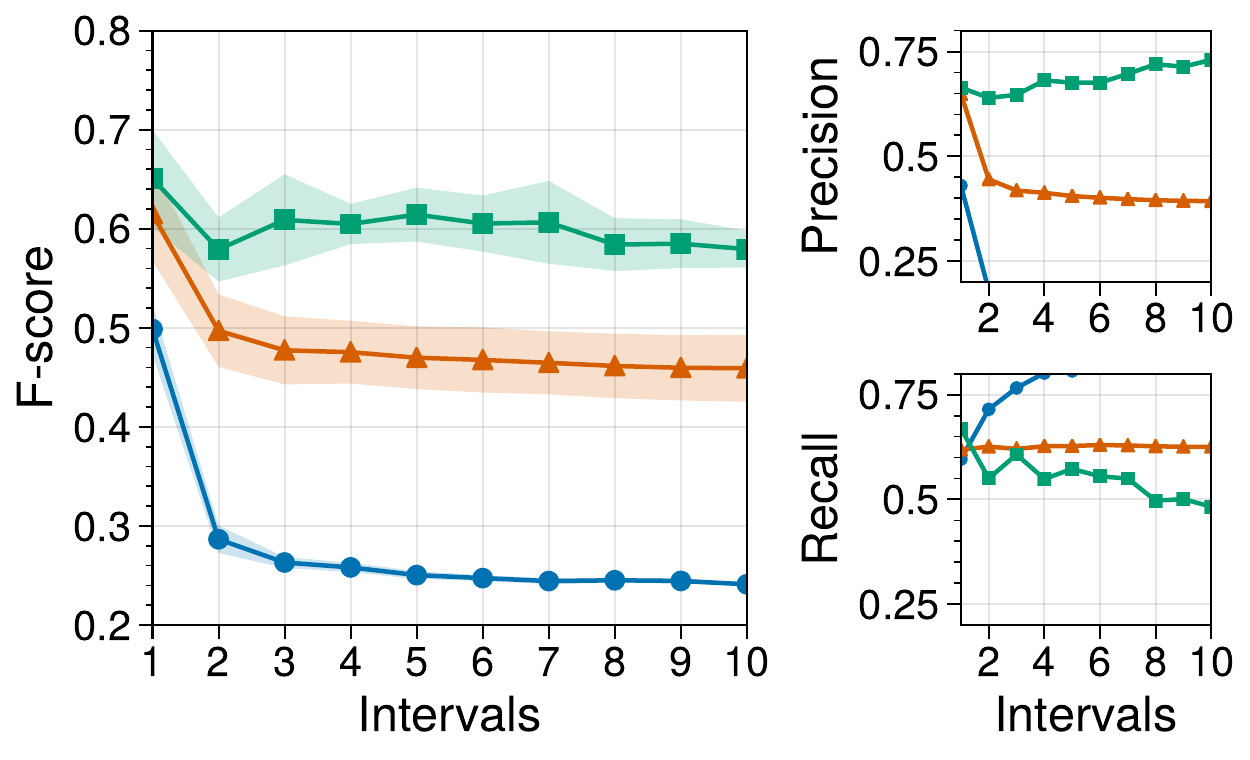}
 	\caption{Random challenger ($k = 5$)}
 	\end{subfigure}
 	\caption*{\textbf{(No access.)} Adversaries (random, static and adaptive) in the presence of \textbf{random challenger} with $k=1,2,5$.}
 	\label{fig:results_more2_random}
 \end{subfigure}

\vspace{0.1in}
 \begin{subfigure}{\textwidth}
 	\centering
 	\begin{subfigure}{\figurewidth\textwidth}
 		\includegraphics[scale=\figurescale]{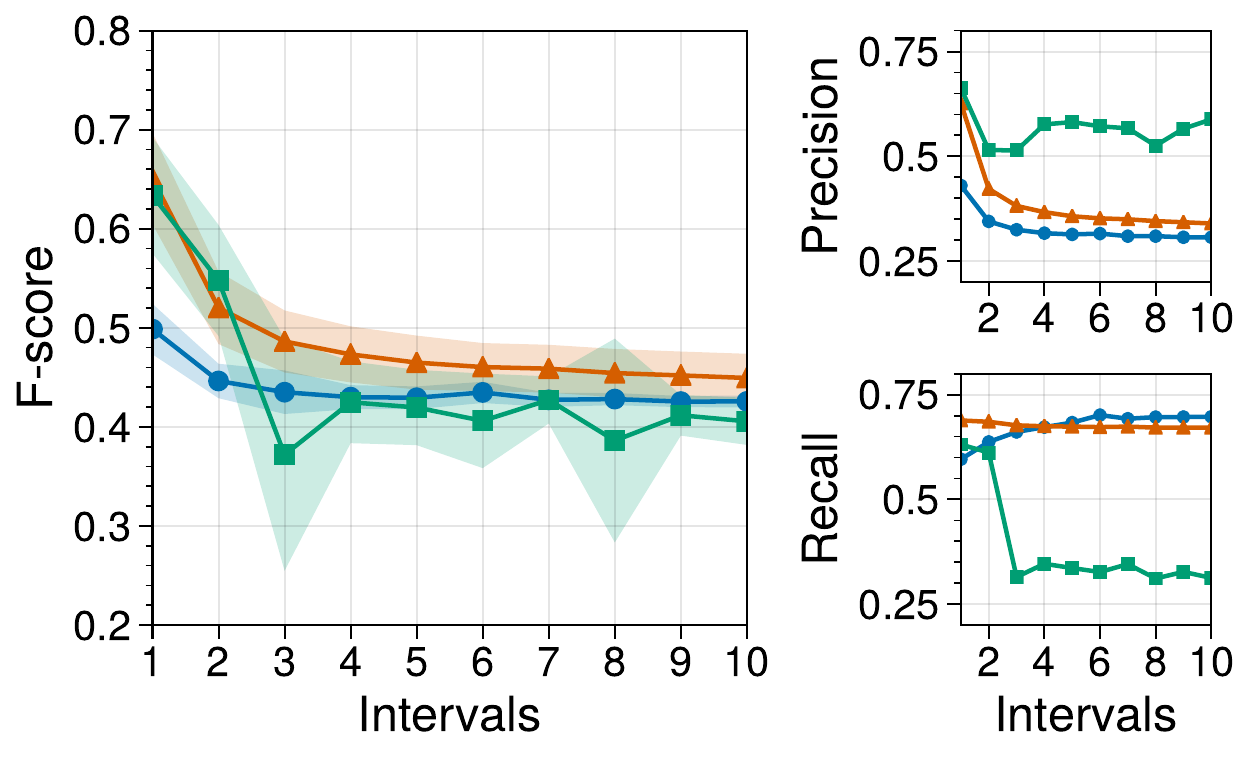}
 		\caption{Oracle challenger ($k = 1$)}
 	\end{subfigure}
 	\begin{subfigure}{\figurewidth\textwidth}
 		\includegraphics[scale=\figurescale]{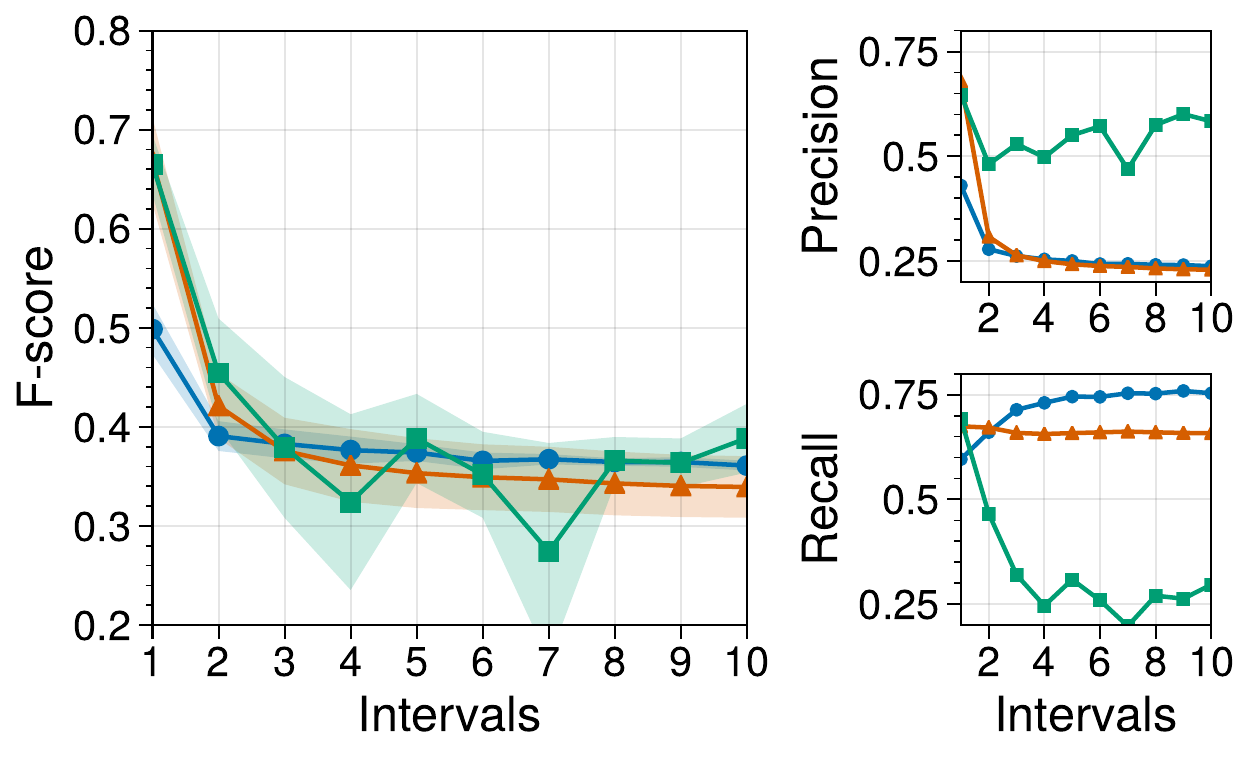}
 		\caption{Oracle challenger ($k = 2$)}
 	\end{subfigure}
 	\begin{subfigure}{\figurewidth\textwidth}
 	\includegraphics[scale=\figurescale]{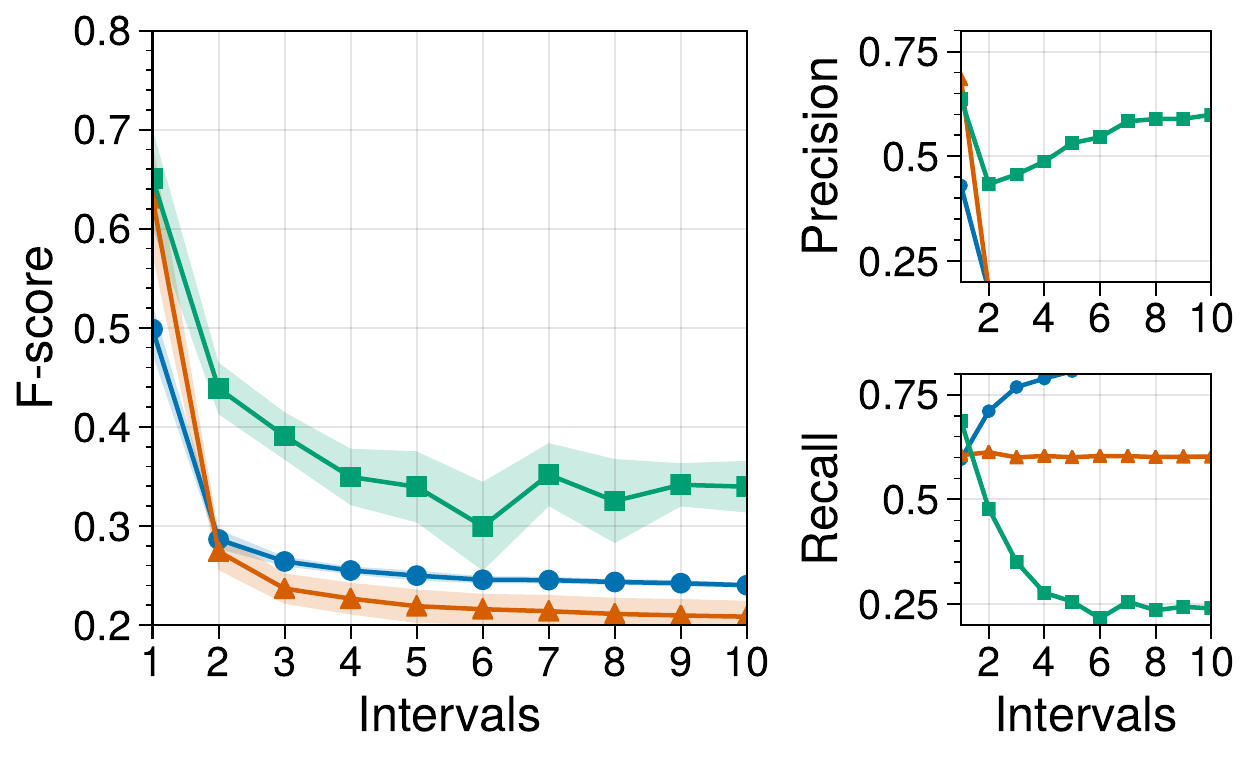}
 	\caption{Oracle challenger ($k = 5$)}
 	\end{subfigure}
 	\caption*{\textbf{(Black-box access.)} Adversaries (random, static and adaptive) in the presence of \textbf{oracle challenger} with $k=1,2,5$.} \label{fig:results_more2_oracle}
 \end{subfigure}

\vspace{0.2in}
\begin{subfigure}{\textwidth}
	\centering
	\begin{subfigure}{\figurewidth\textwidth}
		\includegraphics[scale=\figurescale]{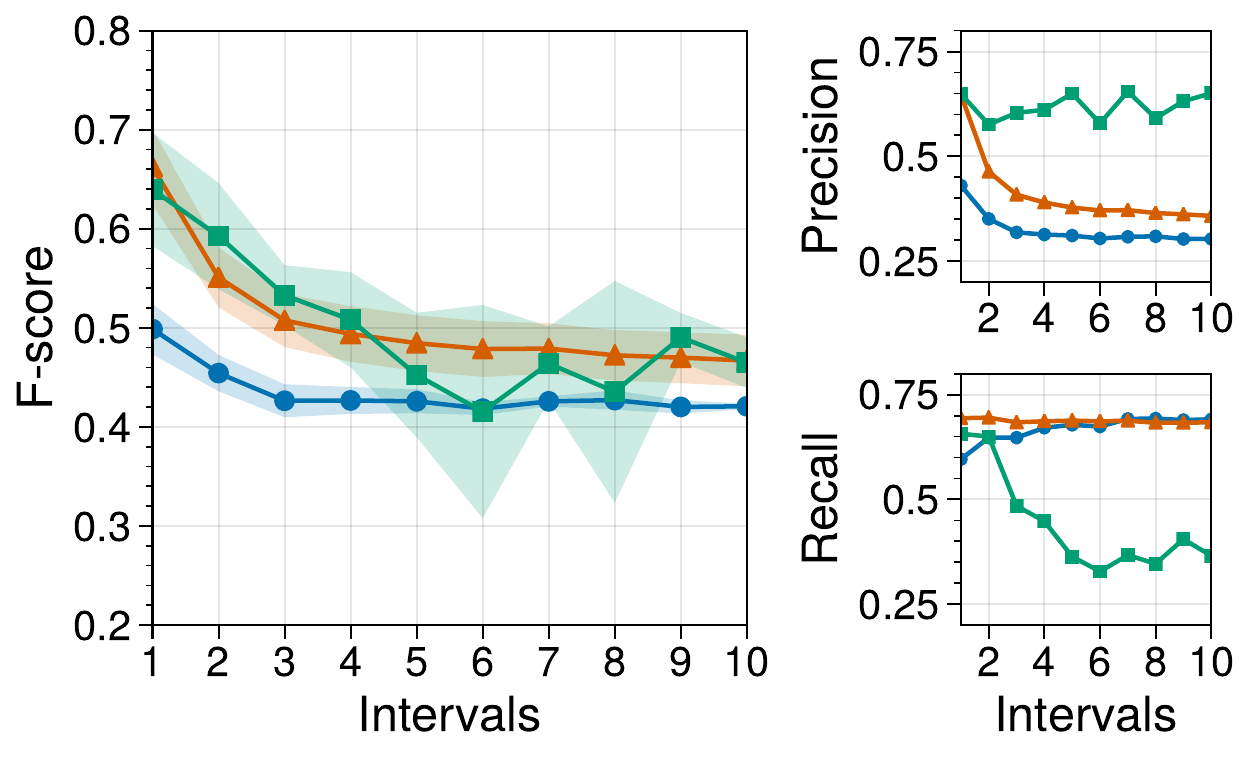}
		\caption{\system challenger ($k = 1$)}
		%\caption{$k = 1$}
	\end{subfigure}
	\begin{subfigure}{\figurewidth\textwidth}
		\includegraphics[scale=\figurescale]{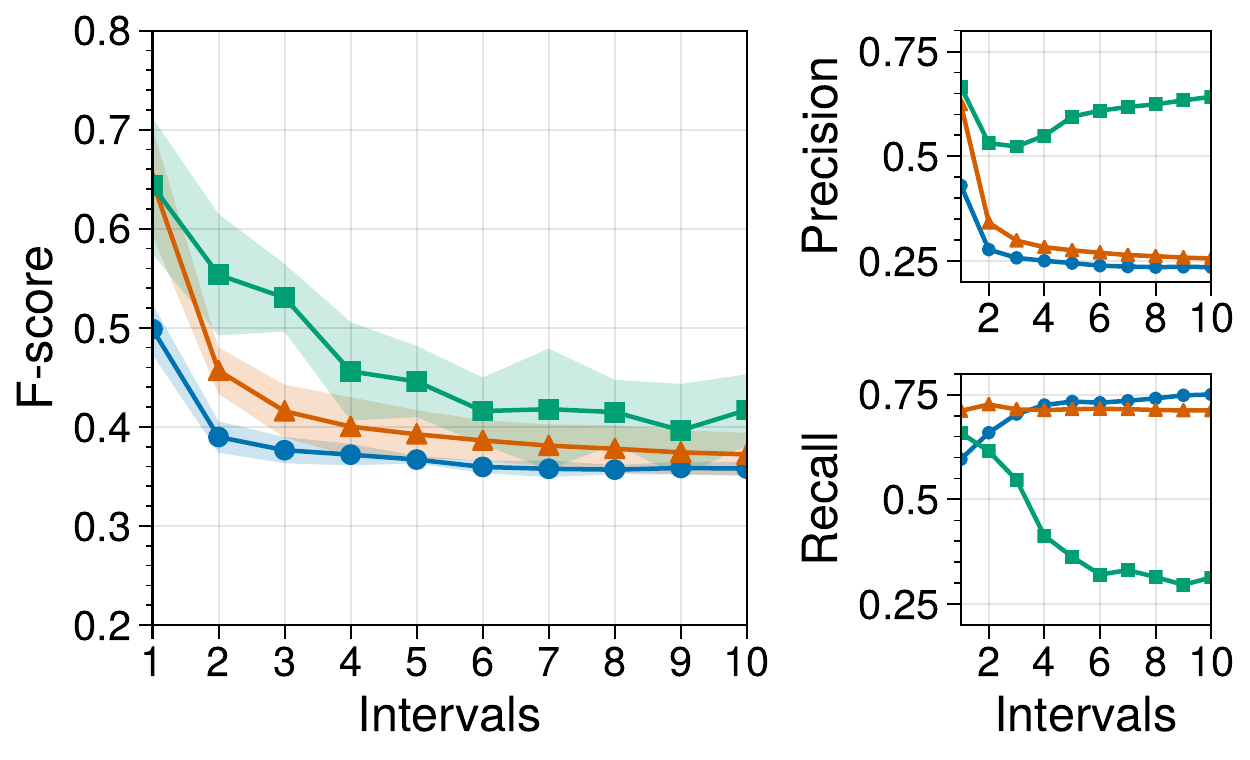}
		\caption{\system challenger ($k = 2$)}
		%\caption{$k = 2$}
	\end{subfigure}
	\begin{subfigure}{\figurewidth\textwidth}
		\includegraphics[scale=\figurescale]{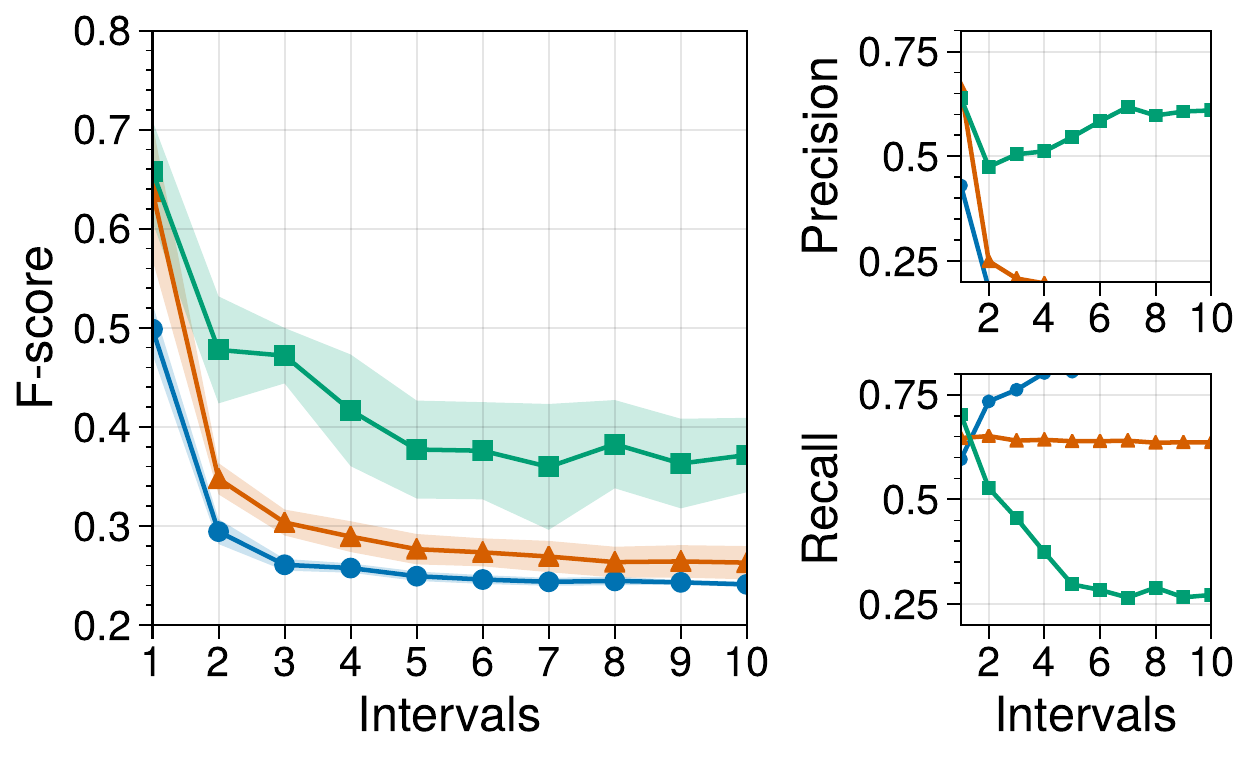}
		\caption{\system challenger ($k = 5$)}
		%\caption{$k = 5$}
	\end{subfigure}
	\caption*{\textbf{(Restricted black-box access.)} Adversaries (random, static and adaptive) in the presence of \textbf{\system challenger} with $k=1,2,5$. }
	\label{fig:results_more2_d2}
\end{subfigure}

    \caption{F-score (with 95\% confidence intervals), precision and recall for the three adversaries (random, static and adaptive) in the presence of different challengers corresponding to different accesses with $k=1,2,5$. Key observation: \system challenger fools the adversaries almost as well as the oracle challenger but with a \textit{restricted} black-box access.}
    \label{fig:entire_results_figure}
\end{figure*}

%%%%%%%%%%%%%  RESULTS END %%%%%%%%%%%%%55

\paragraph{Detection of \damaging deletions in social media platforms is a serious concern}
We start by considering the case where no privacy-preserving deletion mechanism is in place (i.e., no \confuser to inject decoy deletions).  
In such a scenario, we compare the efficiency of different types of adversaries ten intervals shown in~\cref{fig:twitter_results_adversary}.

The {random adversary} labels the posts based on the prior distribution of the deleted tweets (around $42\%$ \damaging and $58\%$ non-\damaging every interval).
As expected, the adversary achieves a $42\%$ precision and $58\%$ recall resulting in an F-score of about $48\%$ in each interval.

As shown in~\cref{fig:twitter_results_adversary}, in the first interval, the static adversary achieves a $17$ percentage points (i.e., a 35\%) increase in its F-score compared to the random adversary, and remains almost constant over the rest of the intervals. On the other hand, the \adaptive adversary receives new training data every interval and trains its classifier continually, and hence is able to increase its F-score even further by about 10 percentage points (56\% increase compared to the random adversary) at the end of the $10$th interval. 

\textit{This shows that even normal users of social media platforms, not only celebrities and politicians, are vulnerable to the detection of their \damaging deletions}.
Furthermore, the adversaries can automate this attack on a large-scale with an insignificant amount of overhead (access to a small dataset of posts with the corresponding labels), highlighting the necessity for a much-needed privacy-preserving mechanism for the users' \damaging deletions in today's social platforms.

\paragraph{Injecting decoy deletions decreases the adversarial performance}
As explained in~\cref{sec:sys_threat_model,sec:model}, we consider three challengers corresponding to the three types of accesses to the adversary's model -- \textit{no access}, \textit{black-box access}, and \textit{restricted black-box access}. In the following, we compare the performance of the adversaries in the presence of the respective challengers against the adversaries' performance in the absence of any challenger.

\emph{No access:}
The top row of~\cref{fig:entire_results_figure} shows the performance of the three adversaries (random, \nonadaptive, and \adaptive) in the presence of the random challenger. 
We observe that although the F-score of both the static and the adaptive adversary decreases for all values of $k$, the reduction is not significant (only 7 percentage points for $k{=}1$ compared to the no-challenger case) In fact, both the adversaries still perform much better than the random adversary. This shows that protection of damaging deletions in the no-access scenario is possible but severely 
limited.

\emph{Black-box access:}
The middle row of \cref{fig:entire_results_figure} shows the performance of the adversaries in the presence of an oracle challenger. 
Not surprisingly, this approach is very effective at lowering the  (\nonadaptive and \adaptive) adversaries' F-scores (close to random for $k{=}1, 2$; i.e., 20 and 35 percentage point reduction in the case of $k{=}1$ for the \nonadaptive and \adaptive adversary respectively compared to the no-challenger case).

We also observe a major difference between the static and the adaptive adversaries in the presence of a competitive challenger. The static adversary retains the same recall performance (as in the no-challenger case) but loses drastically in precision, i.e., it classifies a large number of decoy posts as damaging. On the other hand, the adaptive adversary tries to \textit{adapt} to the presence of decoy posts and becomes highly conservative -- retains the same precision performance (as in the no-challenger case) but suffers heavily in the recall performance, i.e., it classifies a large number of damaging posts as non-damaging.

\textit{Restricted black-box access: }
The bottom row of \cref{fig:entire_results_figure} shows the performance of the adversaries in the presence of the \system challenger. 
The performance of the \system challenger is comparable to the oracle challenger. The adversaries' F-scores in the presence of the \system challenger is close to $45\%$ for the case of $k{=}1$ ($20$ and $30$ percentage point reduction for the \nonadaptive and \adaptive adversaries respectively compared to the no-challenger case). 
We also observe a precision-recall trade-off separating the static and the adaptive adversary (i.e., the static adversary loses in precision, whereas the adaptive adversary loses in recall) similar to the one described in the presence of an oracle challenger . 

Overall, we conclude that the \system challenger is able to successfully raise the bar for the adversaries in identifying damaging deletions \emph{without} requiring an unmonitored black-box access with infinite query budget.

\paragraph{The increase of decoy posts ($k$) results in lower adversarial performance with diminishing returns}
While examining each row of \cref{fig:entire_results_figure} individually, we see that the performance of the adversaries always decreases as $k$, the number of decoy deletions per damaging deletion, increases. However, we also observe that $k=1$ is enough to reduce the F-scores of the adversaries to 45\% (close to the random adversary). 

Since the goal of most social platforms is to retain as many posts as possible, it would \textit{not} be in the platform's best interests to use much larger values of $k$ or to delete the entire volunteered set.

\paragraph{Observation of damaging and decoy posts}
In~\cref{tab:sample_posts} in the Appendix, we show damaging tweets (as labeled by the AMT workers) and decoy tweets (chosen by the \system challenger from a set of non-deleted tweets).
We observe that even though the decoy tweets typically seem to have sensitive words, they do not possess content damaging to the owner.

\section{Discussion} \label{sec:discussion}

\subsection{Adversarial Deception Tactics}
The adversary can use different techniques to sabotage the challenger. 
Here, we mention some prominent systems attacks
and their effects on the challenger.

\paragraph{Denial of Service attack}
One of such attacks could be a simple Denial of Service (DoS), where the attacker submits requests for many damaging deletions to consume all the volunteer posts. 
First, we remind that the volunteered posts are a renewable resource, not a finite resource, as the users create, volunteer and delete posts in each time interval. 
Regardless, a DOS attack is possible wherein the adversary can use up all volunteered posts collected up until this point.

A standard way to avoid such attacks is to limit the number of damaging deletions that can be protected for each user in one time interval (we assume that the adversary can have many \textit{adversarial users} to help with the DoS attack but is not allowed to use bots~\cite{stringhini2010detecting,chu2012detecting,dickerson2014using,wang2010detecting,bessi2016social,ferrara2016rise}). 
As is clear from~\cref{sec:whowins}, the challenger's defense is dependent on the distribution and number of volunteered posts. If there are more \textit{adversarial users} than volunteers, then the adversary can win the game.

We implemented the DoS attack as follows: in every interval, the adversary deletes as much as the standard deletions. 
We observed that the F-score did not change in this situation.

\noindent\paragraph{Volunteer Identification attack}
In a volunteer identification attack, the adversary deletes a bunch of posts and uses the process of doing so to identify individuals who volunteer posts to the challenger for deletion.
First, we note that in each time interval there is a large number of posts being deleted ($>100$ million tweets daily~\cite{lethePets2019}). 
Thus the posts deleted by the adversary (to try to identify volunteers) and the corresponding decoy deletions are mixed with other (damaging/non-damaging/decoy) deletions. 
In such a case, identifying the volunteers is equivalent to separating the decoy deletions from the damaging deletions; reducing to the original task.
Additionally, the challenger does not delete the decoy posts at the same time as the original damaging deletion but does so in batches spread out within the time interval.

Further, the volunteers can also have damaging deletions of their own. Even if an adversary is able to identify volunteers, the adversary still needs to figure out which of the volunteer's deletions are decoys. If the adversary ignores all posts from volunteers, then a simple protection for the users is to become a volunteer, which helps our cause.

\paragraph{Adversary disguising as volunteer}
In this attack, the adversary can take the role of a volunteer (or hire many volunteers) to offer posts to the challenger. Subsequently, the challenger may select the adversary's posts as decoys in the later intervals; however, these posts do not provide deletion privacy as the adversary will be able to discard these decoy posts easily. 
This effect can be mitigated with the help of more genuine volunteers and increasing the number of decoys per damaging deletion.
This points to a more fundamental problem with any crowdsourcing approach: if the number of adversarial volunteers is more than the number of genuine volunteers, the approach fails.

\subsection{Obtaining volunteered posts from users}
Volunteer posts are a significant component of our system. 
In \cref{sec:intro}, we describe the possibility of obtaining these volunteered posts via bulk deletions (i.e., whenever a user bulk-deletes, consider the posts as ``volunteered'' with a guarantee that they will be deleted within a fixed time period). 

However, other strategies could be more effective, for instance, one based on costs and rewards. Under such a strategy, each user seeking privacy for his/her damaging deletions is required to pay a cost for the service, whereas the users that volunteer their non-damaging posts to be deleted by the challenger (at any future point in time) are rewarded.
The costs and rewards can be monetary or can be in terms of the number of posts themselves (i.e., a user has to volunteer a certain number of her non-damaging posts to protect her damaging deletion). Nevertheless, in an ideal world, the volunteered set could also be obtained from {altruistic} users who offer their non-damaging posts for the protection of other users' deletions. 

We contacted the deletion services mentioned in~\cref{sec:intro} and shared our proposal Deceptive Deletions, for the privacy of users' damaging deletions. 
We got responses from some that provide services for the mass deletions on Twitter, Facebook, and Reddit. 
The response that we received has been positive.
They attest that, with Deceptive Deletions, an attacker that observes the deletion of users in large numbers will have a harder time figuring out which of the deleted posts contain sensitive material.

\section{Conclusion}

In this paper, we show the necessity for deletion privacy by presenting an attack where an adversary is able to increase its performance (F-score) in identifying damaging posts by 56\% compared to random guessing. Such an attack enables the system like Fallait Pas Supprimer to perform large-scale automated damaging deletion detection, and leaves users with ``damned if I do, damned if
I don't'' dilemma.

To overcome the attack, we introduce Deceptive Deletions (which we also denote as \confuser), a new deletion mechanism that selects a set of non-damaging posts (decoy posts) to be deleted along with the damaging ones to confuse the adversary in identifying the damaging posts. 
These conflicting goals create a minmax game between the adversary and the \confuser where we formally describe the \fullgame between the two parties. 
We further describe conditions for two extreme scenarios: one where the adversary always wins, and another where the \confuser always wins.
We also show practical effectiveness of \confuser over a real task on Twitter, where the bar is significantly raised against a strong \adaptive adversary in automatically detecting \damaging posts.
Specifically, we show that even when we consider only two \decoy posts for each damaging deletion the adversarial performance (F-score) drops to $65\%$, $42\%$ and $38\%$ where the \confuser has no-access, restricted black-box access and black-box access respectively. 
This performance indicates a significant improvement over the performance of the same adversary ($75\%$ F-score) when no privacy preserving deletion mechanism is in effect. As a result,
we significantly {\em raise the bar} for the adversary going after damaging deletions over the social platform.

Our work paves a new research path for the privacy preserving deletions which aim to protect against a practical, resourceful adversary. In addition, our \game can be adapted for current/future works in the domain of Private Information Retrieval~\cite{howe2009trackmenot,murugesan2009providing,domingo2009h,peddinti2010privacy} that have similar setting for injecting decoy queries to protect the users' privacy.

\Urlmuskip=0mu plus 1mu\relax
{\footnotesize
	\bibliographystyle{IEEEtranS}
	\bibliography{main}
}
\balance
 \appendix
 \subsection{\fullgame vs Generative Adversarial Networks}\label{sec:gan}
Recall that in our setting, the task of the \confuser is to select posts from a pre-defined volunteered set $\cD^\offeredset$. An alternative approach is to use generative models~\cite{goodfellow2014generative,denton2015deep,ledig2017photo,ye2018yet,Radford2019} to generate fake texts ---see Zhang et al.~\cite{zhang2019generating} for a recent survey and Radford et al.~\cite{Radford2019} for the state-of-the-art--- enabling the \confuser to \textit{generate} decoy posts instead of selecting them from a pre-defined set. 
However, we note that such generative models might not be favorable or even effective in practical systems.

Let us consider the case of generating decoy posts on Twitter. 
Twitter posts are attached with a persistent non-anonymous user identities~\cite{correa-2015-anonymityShades}. 
Since, uploading fake posts from real user accounts raises serious ethical concerns, 
one should create multiple bot accounts that will upload machine-generated fake posts to be used as decoy posts (by deleting them later).
However, unfortunately, detection of bot accounts is a well studied problem~\cite{stringhini2010detecting,chu2012detecting,dickerson2014using,wang2010detecting,bessi2016social,ferrara2016rise}. 
Moreover, when an adversary detects a bot, any decoy post from that bot account will be similarly unmasked.
Therefore, in non-anonymous platforms like Twitter, selecting the decoy posts from the posts of actual users is arguably a more practical approach.

\subsection{\fullgame vs Adversarial Learning}\label{sec:adversarial_learning}
\noindent In traditional adversarial learning~\cite{dalvi2004adversarial} setting, there are two players: a classifier and an attacker. The classifier seeks to label the inputs $x$ (for instance, labeling emails as spam or not spam). Now, given a set of test inputs $\{x_i\}_{i=1}^N$, the attacker's goal is to \textit{modify} them such that the classifier will misclassify these examples (for example, in \cite{dalvi2004adversarial}, the attacker modifies spam emails to fool the spam-detector in labeling them as benign). The attacker is free to modify any example $x$ as long as humans would agree on its label (i.e., the attacker's modified email should still be considered as spam by humans). 

Our setting, however, is different in that we are \textit{not allowed to modify} the examples. Rather, the challenger wishes to attack the adversary's classifier by injecting \textit{hard-to-classify} examples into the adversary's dataset (i.e., the deletion set). A key constraint for the challenger is that it has to \textit{select the examples from a preexisting set}  of volunteered posts (i.e., $\cD^\offeredset$). This is because the challenger can only delete existing posts, and cannot generate fake posts (as we discuss in the next paragraph).

\subsection{Proofs}\label{app:proofs}

\begin{proposition*}[\cref{prop:confuser}.]
For any given volunteered set $\cD^\offeredset$ with $N$ non-deleted posts,
$$
\max_{\vphi} \tilde{V}(\vphi; \cD^\offeredset) = \max_{w_1, \ldots, w_{N}} V(w_1, \ldots, w_{N}; \cD^\offeredset)
$$
\end{proposition*}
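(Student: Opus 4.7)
The plan is to reduce both sides to the common quantity $\max_i \ell_i$, where $\ell_i := -\log(1 - a(x_i; \vtheta_t)) \ge 0$ (non-negativity holds since $a(x_i; \vtheta_t) \in [0,1]$). In this notation $V(\vw) = \sum_i w_i \ell_i$ and $\tilde V(\vphi) = \sum_i \alpha_i(\vphi)\,\ell_i$, where $\alpha(\vphi)$ always lies in the probability simplex $\Delta^{N-1}$. So both objectives are convex combinations of the same vector $(\ell_1,\dots,\ell_N)$, and the claim becomes a statement about when such a combination can be pushed to its vertex.

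For the right-hand side, I would interpret the optimization over $w_1,\dots,w_N$ as the natural domain induced by the softmax-based relaxation, namely $\vw \in \Delta^{N-1}$. Since $V$ is linear in $\vw$, its maximum is attained at a vertex; picking $\vw = \ve_{i^\star}$ for any $i^\star \in \arg\max_i \ell_i$ yields $\max_{\vw} V(\vw) = \ell_{i^\star} = \max_i \ell_i$. For the left-hand side, the upper bound $\tilde V(\vphi) \le \max_i \ell_i$ is immediate from the fact that $\tilde V(\vphi)$ is a convex combination of the $\ell_i$'s. The matching lower bound uses the universal-approximation capacity of the neural network $g(\cdot;\vphi)$: for any $M>0$ I can choose $\vphi^{(M)}$ with $g(x_{i^\star};\vphi^{(M)}) = M$ and $g(x_j;\vphi^{(M)})=0$ for $j\neq i^\star$, so that
\[
\alpha_{i^\star}(\vphi^{(M)}) = \frac{e^M}{e^M + (N-1)} \xrightarrow[M\to\infty]{} 1,
\]
and hence $\tilde V(\vphi^{(M)}) \to \ell_{i^\star}$. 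Combining the bounds gives $\sup_\vphi \tilde V(\vphi) = \max_i \ell_i = \max_\vw V(\vw)$, which is the desired equality.

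The main obstacle is conceptual rather than computational: the softmax never attains a one-hot vertex of $\Delta^{N-1}$, so $\max_\vphi \tilde V$ is strictly a supremum, approached but not attained. Making this rigorous requires the limiting construction above together with a mild capacity assumption on the parametric family $\{g(\cdot;\vphi)\}$ — any network that can separate the $N$ input points with arbitrarily large margin suffices, which is precisely universal approximation. A minor subtlety is that if several indices tie at $\max_i \ell_i$, the argument still goes through by choosing any one of them as $i^\star$ (or, equivalently, spreading the softmax mass among the tied arms in the limit).
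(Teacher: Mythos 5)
Your reduction of both objectives to $\max_i \ell_i$ (with $\ell_i = -\log(1-a(x_i;\vtheta_t)) \ge 0$) is clean, and two of your side observations are genuinely more careful than the paper: the softmax never reaches a vertex of the simplex, so $\max_\vphi \tilde V$ is really a supremum attained only in the limit, and the lower bound does require a capacity assumption on the family $\{g(\cdot;\vphi)\}$. The paper's proof glosses over both points. The gap is on the right-hand side. In \cref{eq:optimization_problem} the vector $\vw$ is constrained to be $K$-hot, i.e.\ $\vw\in\{0,1\}^N$ with $\|\vw\|_1=K$, and that constraint is implicitly carried into the proposition; you have silently replaced it with the probability simplex $\Delta^{N-1}$. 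For $K=1$ the two coincide (a linear function on the simplex is maximized at a one-hot vertex), so your argument proves the $K=1$ case. For $K>1$ the discrete optimum is $\sum_{j=1}^{K}\ell_{\pi_j}$, the sum of the top $K$ losses, which strictly exceeds $\max_i\ell_i$ whenever at least two of the $\ell_i$ are positive; your chain of equalities therefore does not establish the stated identity under the paper's constraint set.

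For comparison, the paper's proof works with general $K$: it sorts the $L_i=\log(1-a(x_i;\vtheta_t))$, argues that the optimum assigns weight one to the top $K$ entries and zero to the rest, and treats ties as a separate case. Your bookkeeping, however, exposes a real defect in that argument: the softmax weights $\alpha(\cdot;\vphi,\cdot)$ sum to $1$ while feasible $\vw$ sum to $K$, so the paper's claim that ``the optimal assignment for the discrete objective lies within the solution space of the continuous relaxation'' is false for $K>1$, and indeed $\sup_\vphi\tilde V=\max_i\ell_i<\max_{\vw}V$ in general. The relaxation whose optimum genuinely matches the $K$-hot problem is the convex hull of the $K$-hot vectors ($0\le w_i\le 1$, $\sum_i w_i=K$), not the softmax simplex. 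So your write-up is a rigorous proof of the $K=1$ statement and a useful diagnosis of why the $K>1$ statement needs amendment, but as a proof of the proposition as the paper intends it, the missing ingredient is the constraint $\|\vw\|_1=K$ and an argument (or a renormalization of $\alpha$) that reconciles it with the unit-sum softmax.
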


\begin{proof}[Proof of \cref{prop:confuser}]
Let $S^*_1 = \max_{\vphi} \tilde{V}(\vphi; \cD^\offeredset)$
and $S^*_2 = \max_{w_1, \ldots, w_{N}} V(w_1, \ldots, w_{N}; \cD^\offeredset)$ be the optimum values for the respective objective functions. First, note that $S^*_1 \geq S^*_2$ because the optimal assignment for the discrete objective lies within the solution space of the continuous relaxation. Next, let $L_i = \log(1 - a(x_i; \vtheta_t))$, where $x_i$ is the $i$-th post in $\cD^\offeredset$ and let $\pi$ denote a sorting over them such that $L_{\pi_1} \geq \ldots \geq L_{\pi_{N}}$. 
Then, two cases arise -- (1) when the top $K$ elements are strictly greater than the rest, $L_{\pi_1} \geq \ldots \geq L_{\pi_{K}} > L_{\pi_{K+1}} \geq \ldots L_{\pi(N)}$, and (2) when there is atleast one element in the bottom $N-K$ elements that has the same value as one of the top $K$ elements, $L_{\pi_1} \geq \ldots \geq L_{\pi_{K}} = L_{\pi_{K+1}} \geq \ldots L_{\pi(N)}$. In the former case, the optimal solution is clearly to assign a weight of one to the top $K$ elements and zero to the rest. Any other assignment (even in the continuous solution space) is clearly suboptimal. In the latter case, although there are infinitely many optimal solutions in the continuous domain that distribute the weights differently among the equal elements, the value of the objective function is the same.
\end{proof}

\begin{proof}[Proof of Proposition~\ref{p:fullyoverlapping} (cont)]
First we show the kind of test distribution shift introduced by the \confuser. % will ensure the \adversary will fail. 
The challenger-injected distribution is given by the following hypothetical acceptance-rejection sampling algorithm:
\begin{enumerate}
    \item sample $x\sim p^\offeredset(x)$
    \item sample $u\sim Uniform(0,1)$  independently of $x$
    \item while  $u>p^\damset(x)/(M p^\offeredset(x))$, reject  $x$ and GOTO 1, for some constant $M$.
    \item Accept (output)  $x$  as a sample from $p^\damset(x)$ but with label $y=0$, as the sample came from $p^\offeredset(x)$.
    \item While number of samples less than $k |\cD^\damset|$, GOTO 1
\end{enumerate}
Next we prove that the above rejection sampling algorithm produces samples with distribution $p^\damset(x)$ from examples from decoy examples that have distribution $p^\offeredset(x)$.
Let $X'$ be a sample from the algorithm described above and $X \sim p^\damset(x)$, then $$ p(X' = x) = p(X = x| \text{Accept}) =  \frac{p(X = x, \text{Accept})}{p(\text{Accept})} = p^\damset(x)
$$
because 
\begin{align*}
\frac{P(X = x, \text{Accept})}{P(\text{Accept})} &= \frac{P(\text{Accept}|X = x) p(X = x)}{P(\text{Accept})} \\ &= \frac{\frac{p^\damset(x)}{M p^\offeredset(x)} p^\offeredset(x)}{P(\text{Accept})} \\ &=  \frac{\frac{p^\damset(x)}{M}}{P(\text{Accept})}= p^\damset(x)
\end{align*}
as
\begin{align*}
P(\text{Accept}) &=
\int P(\text{Accept}|X = x)p(X = x) dx\\
& = \int \frac{p(x)}{M q(x)} q(x) dx \\
&= \frac{1}{M} \int p(x) dx = \frac{1}{M} 
\end{align*}

The above ideal accept-reject sampling procedure can be reproduced via noise contrastive estimation~\cite{gutmann2010noise}, which is  method that can generate data from a known distribution without the need to know $p^\damset(x)/(M p^\offeredset(x))$ in advance. 
A variant of the same statistical principle is used today in generative models using Generative Adversarial Networks~\cite{goodfellow2014generative}, which uses a minimax game similar to our procedure.
Because we train the \confuser to mimic the classifier of the adversary, it is easy to construct such rejection sampling method, such that there are in average $k$ decoy examples for every damaging example in the original data.
\end{proof}

%%%%%%%%%%%%%%%%%%%%%%%%%%%%%%%%%%%%
%%%%%%%%%%%%%%%%%%%%%%%%%%%%%%%%%%%%

\begin{table*}[!tb]
	\centering
	
	\caption{Sample tweet text extracts from the damaging, decoy, and non-damaging datasets. The real user accounts within the tweets have been replaced with @UserAccount. Some letters in the offensive keywords have been replaced by *. \label{tab:sample_posts}}
	\small
	\resizebox{\textwidth}{!}{
		\begin{tabular}{l r}
			\toprule
			\textbf{Tweets' text extract} & \textbf{Tweet Type} \\ %& 	Challenger hyperparameter\\
			\midrule
			``\#GrowingUpInTexas Seeing a black person pass by ya front yard and telling your son to pass you the shotgun so you can play shoot em ups'' & damaging\\
			``@UserAccount its gods way of punishing you for your sins. fag**t.'' & damaging\\
			``I think I might have the biggest douche for a boss hands down breaking point'' & damaging\\
			``I don t wanna believe all the women in the auto department at walmart are lesbians    Someone prove me wrong    Cuz im seeing it'' & damaging\\
			``Show up to work on meth once and your nickname is  Tweaker  for the rest of your life '' & damaging\\
			``I cook that pot them junkies treat me like Obama'' & damaging\\
			``gotta love watching two gay men having sex next to my homophobic parents'' & damaging\\
			\hline
			\hline
			``Listening to this deuchbag behind me at Chipotle diss every girl who comes in   hot body  but she has no face    news check  you re fugly'' & decoy\\
			``I grab a beer from the fridge  put on my Bob Marley record  crank that f**ker up  and light up a fat one   my professor is the shit'' & decoy\\
			``Kids having kids  That sh*t is f**kin crazy to me  I d rather be that cool ass uncle that buys the booze aaayyye'' & decoy  \\
			``If you know the boy is in a relationship  and you continue to hook up with him  yes sweetheart sorry to break it to you  you re a wh*re'' & decoy\\
			``This guy smacked his girlfriends ass in public  That's disrespectful'' & decoy\\
			``I don't understand why people say that watermelon and fried chicken is for black people    I love that shit to  Dafuq'' & decoy\\
			``You think communism looks good on paper  Get the hell out of America   USA  damncommie'' & decoy\\
		    \hline
		    \hline
             ``y'all I just watched ``love, simon'' for the first time and let me just say that the ugly tears are so f**kin real omfg'' & non-damaging\\
		    ``I want to eat to rid my emotions but I don't want the calories ya feel me'' & non-damaging\\
		    ``The middle is not the end, but a process you must grow through to get to your new beginning.'' &non-damaging\\
		    ``@UserAccount @UserAccount Did you guys win it or did you burgle it from a classmate’s house again?'' &non-damaging\\
		    ``Love is not about turning human relationships into billions of isolated couples.'' &non-damaging\\
            ``Gonna do some coffee shop hopping tomorrow if anyone wants to join'' & non-damaging\\
            ``I’m pretty sure one of my professors has me mistaken for another black woman in my class.'' & non-damaging\\
            ``Anyways, it's 2 am and the Full House theme song is playing in my head on repeat so if you wanna beat me to death do it now please'' & non-damaging\\
            
			\bottomrule
		\end{tabular}
	}
\end{table*}

\subsection{Examples of Damaging and Decoy Posts}
\cref{tab:sample_posts} presents the damaging tweets (deleted tweets that were labeled as damaging by the AMT workers), the decoy tweets (chosen by the \system challenger) and the non-damaging tweets.

\end{document}